\def\BibTeX{{\rm B\kern-.05em{\sc i\kern-.025em b}\kern-.08em
    T\kern-.1667em\lower.7ex\hbox{E}\kern-.125emX}}
\def\BibTeX{{\rm B\kern-.05em{\sc i\kern-.025em b}\kern-.08em
    T\kern-.1667em\lower.7ex\hbox{E}\kern-.125emX}}
\definecolor{verde}{rgb}{0.,0.6,0.2}
\definecolor{bianco}{rgb}{1.,1.,1.}
\definecolor{marrone}{rgb}{0.7,0.2,0.1}
\definecolor{rosso}{rgb}{1,0,0}
\definecolor{giallo}{rgb}{1.0, 0.87, 0.0}
\definecolor{blu}{rgb}{0.03, 0.27, 0.49}
\definecolor{daffodil}{rgb}{0.03, 0.27, 0.49}
\definecolor{darkcerulean}{rgb}{1.0, 0.87, 0.0}
\newcommand{\greenblue}[1]{{\color{verde}{#1}}}
\newcommand{\blue}[1]{{\color{daffodil}{#1}}}
\newcommand{\red}[1]{{\color{rosso}{#1}}}
\newcommand{\yellow}[1]{{\color{darkcerulean}{#1}}}
\newcommand{\pp}{\ensuremath{\mathbb {P}}\xspace}
\newcommand{\co}{\ensuremath{\mathbb {c}}\xspace}
\DeclareMathOperator{\fail}{\mathsf{fail}}
\DeclareMathOperator{\ID}{\mathsf{ID}}
\DeclareMathOperator{\BNT}{\mathsf{BNT}}
\DeclareMathOperator{\SEP}{\mathsf{SEP}}
\DeclareMathOperator{\DIS}{\mathsf{DIS}}
\DeclareMathOperator{\MHS}{\mathsf{MHS}}
\DeclareMathOperator{\MNS}{\mathsf{MNS}}
\DeclareMathOperator{\Bin}{\mathsf{Bin}}
\DeclareMathOperator{\BAD}{\mathsf{BAD}}
\DeclareMathOperator{\NP}{\mathsf{NP}}
\DeclareMathOperator{\GOOD}{\mathsf{GOOD}}
\newtheorem{theorem}{Theorem}[section]
\newtheorem{definition}[theorem]{Definition}
\newtheorem{lemma}[theorem]{Lemma}
\newtheorem{corollary}[theorem]{Corollary}
\tikzset{
bicolor/.style 2 args={
  dashed,dash pattern=on 20pt off 30pt,-,#1,
  postaction={draw,dashed,dash pattern=on 20pt off 30pt,-,#2,dash phase=20pt}
  },
}
\begin{document}

\title{Counting and localizing defective  nodes by Boolean network tomography}
\author{Nicola Galesi \\ Sapienza Universit\`a Roma  \\ \textit{Department of Computer Science}\\
Rome, Italy   \and  Fariba Ranjbar \\ Sapienza Universit\`a Roma  \\
\textit{Department of Computer Science} \\
Rome, Italy \\}

\maketitle

\begin{abstract}
Identifying defective items in larger sets is a main problem with many applications in real life situations.  We consider the problem of 
localizing defective nodes in networks through an approach based on boolean network tomography ($\BNT$), which is grounded on inferring informations 
from the boolean outcomes of end-to-end measurements paths.  {\em Identifiability} conditions on  the set of paths which guarantee  
discovering or counting unambiguously the defective nodes are of course very relevant. 
We investigate old and introduce new identifiability conditions contributing this problem both from a theoretical and applied perspective. 
  (1) What is the precise tradeoff between number of nodes and number of paths such that at most $k$ nodes can be identified unambiguously ?   The answer is known only for $k=1$ and we answer the question for any $k$, setting a problem implicitly left open in previous works. (2) We study upper and lower bounds on the number of unambiguously identifiable  nodes, introducing new identifiability conditions which strictly imply and are strictly implied by unambiguous identifiability;  (3) We use these new conditions on one side to design algorithmic  heuristics to count defective nodes in a fine-grained way, on the other side to prove the first complexity hardness results on the problem of identifying defective nodes in networks via $\BNT$. (4) We introduce a random model where we study lower bounds on the number of unambiguously identifiable defective nodes and we use  this model  to estimate that number on real networks by  a maximum likelihood estimate approach.\end{abstract}


\section{Introduction}

Identifying a subset of defective items out of a much larger set of items is a problem that found numerous application in a variety of situations such as medical screening, network reliability, DNA  screening, streaming algorithms. {\em Network Tomography} is a general inference technique based on end-to-end measurements aimed to extract internal network characteristics such as link delays and link loss rates but also defective items.
In this paper, we consider {\em Boolean Network Tomography} ($\BNT$) where the outcome of the measurements is a boolean value. Duffield, who as first introduced boolean network tomography \cite{DBLP:journals/tit/Duffield06} to identify network failure components, proposed an  inference algorithm based on $\BNT$ to identify sets of failure links. 
The $\BNT$ approach  was later studied also to identify node failures in networks  \cite{DBLP:conf/imc/MaHSTLL14,DBLP:journals/pe/MaHSTL15,DBLP:journals/ton/MaHSTL17,DBLP:conf/icdcs/GalesiR18, DBLP:conf/algosensors/GalesiRZ19, DBLP:journals/ton/BartoliniHAMTK20}.  

In the case of identifying failure nodes, the $\BNT$ approach deals with extracting as much accurate as possible information on the number and  the positions of the 
corrupted nodes  from the solutions $\vec x$ of a boolean  system $\pp \vec x = \vec b$, where  $\pp$  is the incidence matrix of the $m$ measurement paths over the $n$ nodes and $\vec b$ is the $m$-vector of the 
boolean  outcomes of the measurement paths (see Figure \ref{fig:example}). The challenge of localizing failure nodes is that different sets of failure nodes can produce the same  measurement along the paths and so are indistinguishable from each other using the measurements. 
This leads to pose the following  question: given the set of paths $\pp$ what is the maximal set of defective nodes we can hope
to identify unambiguously ?  Identifiability conditions on the matrix $\pp$ under which failure nodes can be localized  unambiguously (or  also counted accurately) from the  
solution of the system $\pp \vec x = \vec b$ are of course of the utmost interest. In this paper we study old and introduce new of these conditions,
contributing:
\begin{enumerate} 
\item to understand the combinatorics and the complexity of the theoretical problem of  unambiguously identify failure node sets under the $\BNT$ approach, and,  
\item to devise new algorithms and heuristics to count or localize as more precisely as possible failure nodes in networks.   
\end{enumerate}
\begin{figure*}
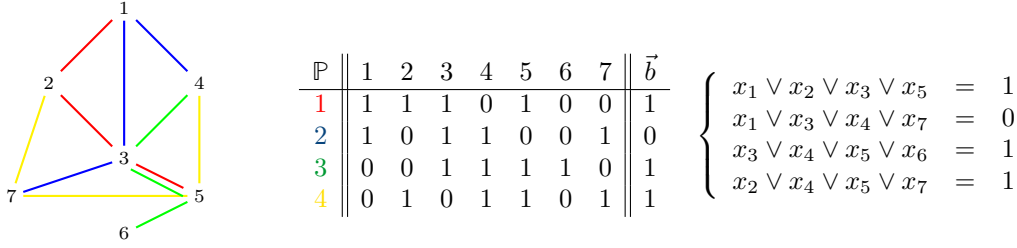

\begin{subfigure}{.3\textwidth}
  \centering
\begin{center}
\begin{scriptsize}
\tikz {
\node (v1) at  (0,0) {$1$};
\node (v2) at  (-1,-1)  {$2$};
\node (v3) at  (0,-2) {$3$};
\node (v3f) at  (0,-2.1){};
\node (v4) at  (1,-1) {$4$};
\node (v5) at  (1,-2.5) {$5$};
\node (v5f) at  (0.87,-2.55) {};
\node (v7) at (-1.5,-2.5){$7$};
\node (v6) at  (0,-3) {$6$};

\draw[-, red, thick] (v1)--(v2);
\draw[-,red,thick] (v2)--(v3);
\draw[-,bicolor={red}{green},thick] (v3)--(v5);

\draw[-,blue,thick] (v7)--(v3);
\draw[-,blue,thick] (v3)--(v1);
\draw[-,blue,thick] (v1)--(v4);

\draw[-,yellow,thick] (v4)--(v5);
\draw[-,yellow,thick] (v5)--(v7);
\draw[-,yellow,thick] (v7)--(v2);

\draw[-,green,thick] (v4)--(v3);
\draw[-,green,thick] (v3f)--(v5f);
\draw[-,green,thick] (v5)--(v6);
}
\end{scriptsize}
\end{center}
\end{subfigure}
\begin{subfigure}{.3\textwidth}
  \centering
 $$
 \begin{array}{c||ccccccc||c}
\pp & 1&2&3&4&5&6&7& \vec b \\
\hline
\red{1} & 1&1&1&0&1&0&0&1 \\
\blue{2} &1&0&1&1&0&0&1&0 \\
\greenblue{3} & 0&0&1&1&1&1&0&1 \\
\yellow{4} & 0&1&0&1&1&0&1&1\\
 \end{array}
 $$ 
\end{subfigure}
\begin{subfigure}{.3\textwidth}
  \centering
  $$
  \left \{\begin{array}{lll}
  x_1 \vee x_2 \vee x_3 \vee x_5 &=& 1\\
  x_1 \vee x_3 \vee x_4 \vee x_7 &=& 0\\
  x_3 \vee x_4 \vee x_5 \vee x_6 &=&1 \\
  x_2 \vee x_4 \vee x_5 \vee x_7 &=&1 
  \end{array}
  \right.
$$
\end{subfigure}

\caption{(1) A set $\pp$ of 4 paths over $7$ nodes. (2) The incidence matrix of $\pp$ and a measurements vector $\vec b$. 
(3) The associated boolean system.  Notice that the outcome  $1$ in the measurement of a path  indicates the presence of at least a node  failure.}
\label{fig:example}
\end{figure*}

\subsection{Previous work}
The condition introduced  as {\em $k$-identifiability} (for $\pp$) states that any two distinct node sets of size at most $k$ 
can be separated by at least a path in $\pp$. $k$-identifiability  initially introduced for link failure detection \cite{DBLP:journals/ton/MaHLST14,DBLP:conf/infocom/RenD16},  was later  studied with success also for node failure detection \cite{DBLP:conf/imc/MaHSTLL14,DBLP:journals/pe/MaHSTL15,DBLP:journals/ton/MaHSTL17,DBLP:conf/icdcs/GalesiR18, DBLP:conf/algosensors/GalesiRZ19, DBLP:journals/ton/BartoliniHAMTK20}. If this condition is true for a set of measurements paths $\pp$ it ensures  that 
if there are at most $k$ failure nodes in $\pp$ then these nodes can be identified unambiguously.  Hence the optimization problem of computing  
the maximal  $k\leq n$ such that a set $\pp$ is $k$-identifiable ($k$-$\ID$), i.e. admits the $k$-identifiability property is very relevant 
to the problem of node failure localization. We refer to this maximal value as $\mu(\pp)$ (it was called $\Omega(\pp)$ in \cite{DBLP:journals/pe/MaHSTL15}). 

As observed in \cite{DBLP:journals/pe/MaHSTL15,DBLP:journals/ton/MaHSTL17} $k$-identifiability can be {\em scaled} to each single node yet preserving the property for the whole set  of paths. A node $u$  is $k$-$\ID$ if any two sets of size at most $k$ differing on $u$ are separated by at least a path in $\pp$ (see Definition \ref{def:idnode}).  
Hence understanding the combinatorics  of the  set $\ID_k(\pp)$ of the $k$-identifiable nodes in $\pp$ and study upper and lower bounds for 
$|\ID_k(\pp)|$ is of great importance to develop algorithms to maximize the identification of failure nodes in real networks. 

Both definitions of $k$-identifiability were largely investigated. In \cite{DBLP:journals/ton/MaHSTL17} they started this study quantifying the capability of failure localization through (1) the maximum number of failures such that failures within a given node set can be localized unambiguously, and (2) the largest node set  failures
can be uniquely localized under a given bound on the total number of failures. These measures where used  to evaluate the impact of maximum identifiability  on various parameters of the network (underlying the  set of  paths) like the  topology, the number of monitor and the probing mechanisms. 


In the work \cite{DBLP:conf/icdcs/GalesiR18,GR20}  we studied $k$-identifiability from the topological point of view of the graph underlying $\pp$. 
We were proving tight bounds on the maximum identifiability that can be reached in the case of topologies  like trees, 
grids and hypergrids and under embeddings on directed graphs. Our results culminated with a heuristic to design networks with a 
high degree of identifiability or to modify a network to boost identifiability.    
In the work \cite{DBLP:conf/algosensors/GalesiRZ19} we were employing {\em Menger's theorem} establishing a precise relation of  $\mu(\pp)$  with the vertex connectivity of the graph underlying $\pp$. We generalize results in \cite{DBLP:conf/icdcs/GalesiR18} to Line-of-Sight Networks and started the study of identifiability conditions on random graphs and random regular graphs.    

Monitor  placement can be in fact relevant to improve identifiability of failure nodes. The works  \cite{DBLP:journals/pe/MaHSTL15, DBLP:journals/ton/BartoliniHAMTK20}   considered  the problem of optimizing the capability of identifying network failures through different monitoring schemes  giving  upper bounds on the maximum number of identifiable nodes, given the number of monitoring paths, the routing scheme and  the maximum path length. 
In  \cite{DBLP:journals/ton/BartoliniHAMTK20} in particular studied upper bounds on the set of $|\ID_1(\pp)|$ and as in the case of \cite{DBLP:conf/icdcs/GalesiR18} they provide hueristics on how to design topologies and related monitoring schemes to achieve the maximum identifiability under various network settings.  

 \subsection{Contributions}
In this work we introduce new identifiability measures and deepen the study of $k$-identifiability obtaining several new results and new 
heuristics to test networks against the number and position of failing nodes.    From a theoretical  perspective  our contributions are the following
\begin{enumerate}
\item \label{item1} We set a question implicitly left open in some previous works \cite{DBLP:journals/ton/MaHLST14,DBLP:journals/ton/BartoliniHAMTK20} about  the limits of upper bounds on identifiability of node failures  via Boolean network tomography. 
What are the precise tradeoffs between number of nodes $n$ and number of paths $m$ of $\pp$  such that $\pp$ is no longer $k$-identifiable, that is  $\mu(\pp)< k$? The answer is known only for $k=1$ where  the tradeoff $n \geq 2^{m}-1$ implies $\mu(\pp)<1$, is obtained by a straightforward counting argument ( see Lemma \ref{lem:mu1} and \cite{DBLP:journals/ton/BartoliniHAMTK20}).   

Using the notion of {\em regular union-free} families, we answer to the  problem for any $2 \leq k\leq n$, showing  that  $n \geq   2^{\frac{k}{k-1}(m+k-1)^{(1+\epsilon)}}$ implies $\mu(\pp)<k$, for any $\epsilon>0$.  

\item We introduce two new identifiability notions, namely, {\em $k$-separability} ($k$-$\SEP$) and {\em $k$-distinguishability} ($k$-$\DIS$).
Analogously to identifiability we define these notions on nodes and we consider the corresponding node sets $\SEP_k(\pp)$ and $\DIS_k(\pp)$.
These conditions provide significant upper and lower bounds to identifiability: namely we prove that  for all $k\leq n$, 
 $k$-$\SEP$ {\em implies} $k$-$\ID$ and  
 $k$-$\ID$ {\em implies} $k$-$\DIS$, both strictly. Hence $\SEP_k(\pp)\subseteq \ID_k(\pp) \subseteq \DIS_k(\pp)$. 
  We use these measures to get  upper and lower bounds for $|\ID_k(\pp)|$ and $\mu(\pp)$, to study the computational complexity  of  identifiability conditions and to estimates the number of $k$-identifiable nodes 
through a  random model. Namely:

\item \label{item2.b}We prove that  the problem of deciding the non $k$-separability (hence the non $k$-identifiability) of a given node in $\pp$ is polynomial time reducible to the  {\em minimum hitting set} problem ($\MHS$).  Furthermore we prove that the optimization problem of {\em finding the minimal $k$ such that a given node is not $k$-separable} in $\pp$  is $\NP$- complete.  To our knowledge these are the first known hardness results of identifiability problems arising from boolean network tomography. 

\item \label{item3} We introduce and study a  random model for $\pp$  based on the binomial distribution and we estimate lower bounds  on the number of $k$-identifiable nodes $|\ID_k(\pp)|$ in this model by analyzing the number of $k$-separable nodes in $\pp$.

\item  \label{item2.a} We use node distinguishability to study upper bounds on the number of $k$-identifiable nodes parameterizing the search of such nodes
 in terms of specific subset of nodes and specific subset of paths in $\pp$.  We introduce the relation  (Definition \ref{def:keq})
 {\em $u$  $k$-equal $W$  modulo ${\cal P}$}, where $u$ is node,  $W$ a set of nodes and  ${\cal P}$ a family of paths in $\pp$
 that characterizes non-distinguishability of $u$ restricted to the set  $W$ with respect to ${\cal P}$. 
 A recursive construction (Definition \ref{def:tau} of $\tau_k$) built on the previous relation allows to  upper bound efficiently the number of $k$-identifiable nodes in a fine-grained way.

\end{enumerate} 

From a more applied perspective our results have the following consequences and applications.
\begin{enumerate}
\item  The  result in item \ref{item1} can be used as an  estimate of upper bounds on the  number of $k$-identifiable nodes in $\pp$. As \cite{DBLP:journals/ton/BartoliniHAMTK20}  use the  result for $k=1$   to prove that $|\ID_1(\pp)|\leq \min(n,2^m-1)$ (see Theorem \ref{thm:arrigo}), our  bound  proves the general statement that for all $2\leq k\leq n$,  $|\ID_k(\pp)| \leq \min\{n,2^{\frac{k}{k-1}(m+k-1)^{(1+\epsilon)}}\}$ (Theorem \ref{thm:mainbound}).  Our bound can also be 
used as a black-box in algorithms and heuristics aimed at approximating the number of identifiable (\cite{DBLP:conf/imc/MaHSTLL14, DBLP:journals/pe/MaHSTL15,DBLP:journals/ton/MaHSTL17}) nodes which use the bound for $k=1$. For instance the ICE heuristic of \cite{DBLP:journals/ton/BartoliniHAMTK20}, that creates a set of paths $\pp$ reaching a certain value of $\mu(\pp)$,  is generating paths according to  the result for $k=1$. 

\item The fact that the $\MHS$  problem is reducible to the non-separability problem (item \ref{item2.b}) suggests the idea of using the {\em minimal hypergraph transversal} (instead of a minimum hitting set) to lower bound the number of separable nodes (hence identifiable nodes) in $\pp$. Given  an order of the  variables a minimum hypergraph transversal in a set-system can be efficiently computed. We propose two algorithms based on the hypergraph transversal ({\tt Simple}-$\SEP$ and {\tt Decr}-$\SEP$). In particular in the second algorithm we use a new idea which partition the set of nodes of  $\pp$ in family of subsets of nodes called {\em $0$-decreasing} which allow to apply in a more efficient way  the hypergraph transversal heuristic ({\tt Decr}-$\SEP$). 

\item We employ the random model in item \ref{item3} to approximately counting  the number of $k$-identifiable nodes on concrete networks using an approach  based on the {\em maximum likelihood estimate} for binomial distributions.  Our experimental results indicate that  a lower bound for the number of 
$k$-identifiable nodes of a real network can be computed very accurately  using a relatively simple random model based on the binomial  distributions and computing the probability that a node is $k$-separable in this model. We  then consider a real set of measurement paths $\hat \pp$ as it was a random experiment, we plug in the MLE estimates on $\hat \pp$ in  the probability formula of the random model to estimate the cardinality of the sets $\SEP_k(\hat \pp)$.   
\item We use the definition of $\tau_k$  to  upper bound  the number of $k$-identifiable nodes in $\pp$ according to specific families of subset of nodes and subset of paths. As we show in Section \ref{sec:app}, this can be used to compute approximations of the value of $\mu(\pp)$ and $|\ID_k(\pp)|$ which are efficiently computable (Algorithm {\tt lb-$\DIS_k$}).  

\end{enumerate}

The paper is organized as follows: first we give the preliminary definitions on boolean network tomography and identifiability, showing the connection with unambiguous identification of failure nodes. In Section \ref{sec:limit} we study the tradeoffs between number of nodes and number of paths. In Section \ref{sec:ref} we give the definitions of $k$-separability and $k$-distinguishability and prove the relation with identifiability. In Section \ref{sec:random} we introduce the   
random model and we show how to count $k$-separable nodes (hence lower bounds on  $k$-identifiable nodes) on real  networks through a  {\em maximum likelihood estimate} method.
 In Section \ref{sec:hypergraph} we present the 
results on the computational complexity of $k$-identifiability and we introduce two algorithms based on hypergraph transversal to count identifiable nodes.  In Section \ref{sec:app} we introduce the definition of $\tau_k$  and a corresponding method (based on distinguishability) {\tt } to compute upper bounds on  identifiable nodes in a fine-grained way, when the set of paths is obtained by taking all the paths in a graph from a set of sources to a set of target nodes.

\section{Preliminary definitions}

Let $n,k \in \mathbb N $ and $k\leq n$. ${[n]  \choose k}$ is the set of subsets of $[n]$ of size $k$. ${[n]  \choose \leq k}$ be the set of subsets of $[n]$ of size at most $k$.
$2^{A}$ is the set of subsets of the set $A$. $A\oplus B$ is the symmetric difference between $A$ and $B$. $\overline A$ denotes the complement of $A$.

Let $n$ and $m$ be positive integers. We encode a  {\em set of $m$ paths over nodes in $[n]$} as a collection $\pp$  of $n$ {\em distinct} $m$-bit vectors
such that $ \mathbb 0 \not \in \pp$, i.e. the $m$-bit zero vector is not in $\pp$ (this condition means that each node in $[n]$ is used in at least a path).  

We can view $\pp$ in three different ways: as a boolean $m \times n$-matrix, as a collection of $n$ $m$-bit vectors  and  as a collection
of $m$ $n$-bit vectors.  For a node $u \in [n]$, $\co_u$ is then the  $m$-bit vector whose $p$-th coordinate indicates whether the node $u$ is in the $p$-th path or not.

We use also $\pp$ in a graph notation as follows:  if $u\in [n]$ is a node, then $\pp(u)$ identifies the set of all paths touching $u$, in other words the set
$\{p \in [m] : \pp[p,u]=1\}$. If $U \subseteq [n]$ is a set of nodes, $\pp(U)$ denotes the set of paths in $[m]$ touching at least a node in $U$, i.e.
$\pp(U)=\bigcup_{u\in U} \pp(u)$.

\subsection{Identifiability}

Let $\pp$ be a set of $m$ paths over $n$ nodes. We consider the following definition \cite{DBLP:conf/imc/MaHSTLL14}
\begin{definition}
\label{def:kid}
$\pp$ is $k$-identifiable  if  for all $U,W \subseteq [n]$ such that $|U|,|W| \leq k$ and $U \not = W$, it holds that $\pp(U) \not = \pp(W)$. 
\end{definition}

Notice that in terms of the column-vector notation, the previous definition says that  
for all distinct sets $U,W\subseteq [n]$ of size at most $k$, $\displaystyle \bigvee_{u\in U} \mathbb c_u \oplus \bigvee_{w \in W} \mathbb c_{w} \not =\mathbb 0 .
$

The definition of $k$-identifiability can be equivalently given for nodes $u\in [n]$ as follows (see also \cite{DBLP:journals/pe/MaHSTL15}).
\begin{definition} ($k$-identifiable nodes)
\label{def:idnode}
A node $u\in [n]$ is  $k$-identifiable with respect to $\pp$, if  for all $U,W \subseteq [n]$ of size at most $k$ and such that  $U\cap\{u\} \not = W\cap \{u\}$, 
it holds that $\pp(U) \not = \pp(W)$. 
\end{definition}

$\ID_k(\pp)$ denotes the set of $k$-identifiable nodes in $\pp$. From the definitions a second of thought allows to see that  $k$-identifiability implies $k'$-identifiability
for $k'<k$. Hence
\begin{lemma} Let $\pp$ be a set of $m$ paths over $n$ nodes. 
Then $\ID_k(\pp) \subseteq \ID_{k'}(\pp)$ for $k'\leq k\leq n$.
\end{lemma}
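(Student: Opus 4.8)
The plan is to unwind the definition of $k$-identifiable node (Definition \ref{def:idnode}) and observe that the quantifier ranges over a smaller collection of test pairs when the parameter decreases. Concretely, fix $k' \leq k \leq n$ and let $u \in \ID_k(\pp)$; I would show $u \in \ID_{k'}(\pp)$ by checking the defining condition of Definition \ref{def:idnode} for the parameter $k'$.

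First I would take arbitrary $U, W \subseteq [n]$ with $|U|, |W| \leq k'$ and $U \cap \{u\} \neq W \cap \{u\}$. Since $k' \leq k$, these same sets satisfy $|U|, |W| \leq k$, so they form an admissible test pair for the $k$-identifiability of $u$. Because $u \in \ID_k(\pp)$, the definition yields $\pp(U) \neq \pp(W)$. As $U, W$ were arbitrary subject to the $k'$-constraints, this establishes $u \in \ID_{k'}(\pp)$, and hence $\ID_k(\pp) \subseteq \ID_{k'}(\pp)$.

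There is no real obstacle here: the statement is an immediate consequence of the fact that shrinking the size bound only removes constraints from the universally quantified condition, never adds any. The only thing to be careful about is that the condition $U \cap \{u\} \neq W \cap \{u\}$ is identical in both definitions and does not interact with the size bound, so the inclusion of test pairs $\{(U,W) : |U|,|W| \leq k'\} \subseteq \{(U,W) : |U|,|W| \leq k\}$ transfers the property verbatim. If one prefers, the same argument applied at the level of Definition \ref{def:kid} shows the global version ($k$-identifiability of $\pp$ implies $k'$-identifiability of $\pp$), from which the node-wise statement also follows.
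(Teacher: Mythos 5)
Your proof is correct and is exactly the monotonicity-of-quantification argument the paper has in mind (the paper simply remarks that the inclusion follows "from the definitions" without spelling it out). Nothing further is needed.
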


Furthermore scaling to identifiability of nodes does not affect the main property of $k$-identifiability (which we see below).
Next theorem is proved in \cite{DBLP:journals/pe/MaHSTL15} (Theorem 4). 
 \begin{theorem}(\cite{DBLP:journals/pe/MaHSTL15}) Let $\pp$ be a set of $m$ paths over $n$ nodes. 
 $\pp$ is $k$-identifiable if and only if every node in $[n]$ is $k$-identifiable with respect to $\pp$.
 \end{theorem}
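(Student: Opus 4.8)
The plan is to prove the equivalence in both directions, where the non-trivial direction is showing that node-wise $k$-identifiability for every node implies $k$-identifiability of the whole set $\pp$.

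\textbf{The easy direction.} Suppose $\pp$ is $k$-identifiable in the sense of Definition~\ref{def:kid}. Fix any node $u \in [n]$ and take $U, W \subseteq [n]$ of size at most $k$ with $U \cap \{u\} \neq W \cap \{u\}$. This last condition in particular forces $U \neq W$ (a set cannot differ from itself on $\{u\}$), so Definition~\ref{def:kid} immediately gives $\pp(U) \neq \pp(W)$. Hence $u$ is $k$-identifiable with respect to $\pp$; since $u$ was arbitrary, every node is $k$-identifiable.

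\textbf{The harder direction.} Suppose every node in $[n]$ is $k$-identifiable with respect to $\pp$, and let $U \neq W$ be subsets of $[n]$ of size at most $k$. Since $U \neq W$, the symmetric difference $U \oplus W$ is nonempty; pick any $u \in U \oplus W$. Without loss of generality $u \in U \setminus W$, so $U \cap \{u\} = \{u\} \neq \emptyset = W \cap \{u\}$, i.e. $U$ and $W$ differ on $u$. Both sets have size at most $k$, so the $k$-identifiability of the node $u$ applies directly to the pair $(U, W)$ and yields $\pp(U) \neq \pp(W)$. As $U, W$ were an arbitrary distinct pair of size at most $k$, this is exactly Definition~\ref{def:kid}, so $\pp$ is $k$-identifiable.

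\textbf{Where the content sits.} Both implications are short once one unwinds the definitions; the only point requiring a moment's care is the harder direction, where one must observe that the global condition ``$U \neq W$'' can always be witnessed by a \emph{single} node $u$ on which the two sets differ, and that this $u$ is precisely the kind of witness the node-wise definition is built to exploit. There is no genuine obstacle here — no counting, no structural argument about $\pp$ — and the proof is essentially a translation between the quantifier ``$\exists u$ such that $U, W$ differ on $u$'' and ``$U \neq W$''. I would present it in the two-paragraph form above, perhaps compressing the easy direction to a single sentence since it is an immediate specialization of Definition~\ref{def:kid}.
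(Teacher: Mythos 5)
Your proof is correct. Note that the paper does not actually supply a proof of this statement --- it simply cites it as Theorem 4 of the referenced work --- so there is no in-paper argument to compare against; your two-direction unwinding of Definitions \ref{def:kid} and \ref{def:idnode} (the easy specialization one way, and the observation that any distinct pair $U \neq W$ of size at most $k$ is witnessed by a single node $u \in U \oplus W$ on which they differ, the other way) is exactly the standard argument this equivalence rests on, and both steps check out.
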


We denote by $\mu(\pp)$ the maximal $k\leq n$ s.t.  $\pp$ is $k$-identifiable.

\medskip

Let us  motivate our definitions in the context of the approach  of boolean network tomography to detect failure nodes in networks.
Assume to have  a set $\pp$ of $m$ end-to-end paths over $n$ nodes $\pp$. 
A {\em binary measurement} $\mathbb M$ along a path $p \in [m]$ is obtained by sending a message through $p$ and 
recording the outcome $\mathbb M(p)$, a bit,  which identifies 
(in the case $\mathbb M(p)=1$) that some node in $p$ is failing, or (in the case $\mathbb M(p)=0$) that no node is failing along the path $p$.

We claim that if $\pp$ is $k$-identifiable, then under any binary measurement $\mathbb M$, 
we can {\em uniquely localize} in $\pp$ up to $k$ failing nodes. 
Given a binary measurement $\mathbb M$ over $\pp$, let $\fail_{\mathbb M}(\pp) = \{p\in [m] | \mathbb M(p)=1\}$.

\begin{definition}(Unique failure)
Let $\pp$ be a set of $m$ paths over $n$ nodes and $\mathbb M$ a binary measurement on  $\pp$. 
A set of nodes $W\subseteq[n]$ is {\em failing} in $\pp$ if
$ \pp(W) \subseteq  \fail_{\mathbb M}(\pp)$. 
 $W$ is {\em uniquely failing} if  it is failing and furthermore $\overline{\pp(W)} \subseteq \overline{\fail_{\mathbb M}(\pp)}$, i.e. on any path
not touching $W$ the measurement is not failing.
\end{definition}


\begin{theorem}
Let $\pp$ be a set of $m$ paths over $n$ nodes. If $\mu(\pp)\geq k$, then there is exactly one set  
of nodes   of size at most $k$ that is uniquely failing in $\pp$.   
\end{theorem}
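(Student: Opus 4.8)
The plan is to reduce the statement to the definition of $k$-identifiability after reformulating ``uniquely failing'' as a single equality of path sets. First I would record the elementary equivalence: a set $W\subseteq[n]$ is uniquely failing in $\pp$ under $\mathbb M$ \emph{if and only if} $\pp(W)=\fail_{\mathbb M}(\pp)$. Indeed, ``$W$ failing'' is the inclusion $\pp(W)\subseteq\fail_{\mathbb M}(\pp)$, while the extra condition $\overline{\pp(W)}\subseteq\overline{\fail_{\mathbb M}(\pp)}$ is, upon taking complements, exactly $\fail_{\mathbb M}(\pp)\subseteq\pp(W)$; the two inclusions together give the equality, and conversely the equality gives both inclusions. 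This turns the theorem into the assertion: there is exactly one $W$ with $|W|\le k$ and $\pp(W)=\fail_{\mathbb M}(\pp)$.

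For \emph{existence}, I would use that the measurement $\mathbb M$ in the intended scenario is produced by a ground-truth set $F$ of at most $k$ failing nodes, so that by the definition of a binary measurement $\mathbb M(p)=1$ precisely when $p$ touches $F$; hence $\fail_{\mathbb M}(\pp)=\pp(F)$. By the equivalence above, $F$ is itself a uniquely failing set of size at most $k$, so at least one such set exists (this also covers the degenerate case $F=\emptyset$, since $\mathbb 0\notin\pp$ forces $\pp(W)\neq\emptyset$ for $W\neq\emptyset$).

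For \emph{uniqueness}, suppose $W_1$ and $W_2$ are both uniquely failing in $\pp$ with $|W_1|,|W_2|\le k$. By the equivalence, $\pp(W_1)=\fail_{\mathbb M}(\pp)=\pp(W_2)$. Since $\mu(\pp)\ge k$, the family $\pp$ is $k$-identifiable (Definition~\ref{def:kid}), i.e.\ any two subsets of $[n]$ of size at most $k$ with the same set of touched paths must coincide; therefore $W_1=W_2$. Combining existence and uniqueness proves the theorem.

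The only delicate point is a modelling one rather than a mathematical difficulty: existence of a uniquely failing set of size $\le k$ is \emph{not} automatic for an arbitrary bit-vector $\mathbb M$, and relies on the standing assumption that $\mathbb M$ records the behaviour of at most $k$ genuinely failing nodes, i.e.\ that $\fail_{\mathbb M}(\pp)$ is realizable as $\pp(F)$ for some $|F|\le k$; I would state this assumption explicitly just before the proof. One can additionally note that $k$-identifiability is essentially necessary for the conclusion: if $\pp$ fails to be $k$-identifiable, there are distinct $U\neq W$ of size $\le k$ with $\pp(U)=\pp(W)$, and the measurement induced by $U$ then admits two uniquely failing sets of size $\le k$ — but the stated theorem asks only for the forward direction, so this remark is optional.
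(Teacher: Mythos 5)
Your proof is correct, and its uniqueness half rests on the same core fact as the paper's proof: two distinct uniquely failing sets of size at most $k$ would contradict $k$-identifiability. The packaging differs in two ways worth noting. First, you reformulate ``uniquely failing'' as the single equality $\pp(W)=\fail_{\mathbb M}(\pp)$, after which uniqueness is immediate from Definition~\ref{def:kid}; the paper instead runs a path-level contradiction, picking a path $p$ separating the two sets and comparing the two values that $\mathbb M(p)$ is forced to take (incidentally, the paper's write-up swaps the roles of $0$ and $1$ in that step, though the structure of the contradiction is sound, and your equality formulation avoids that pitfall entirely). Second, and more substantively, you address existence: the paper's argument only establishes that there is \emph{at most one} uniquely failing set of size at most $k$, while the statement claims \emph{exactly one}, and you correctly observe that existence is not automatic for an arbitrary bit-vector $\mathbb M$ --- it requires the standing modelling assumption that $\fail_{\mathbb M}(\pp)=\pp(F)$ for some ground-truth failure set $F$ with $|F|\leq k$. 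Making that realizability assumption explicit is precisely what is needed to justify the word ``exactly,'' so your version is faithful to the paper's idea while being more complete.
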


\begin{proof}
Let $\mathbb M$ be a measurements over $\pp$. Assume by contradiction that there are two distinct sets $U$ and $W$ of size at most $k$  
which are  both uniquely failing in $\pp$ under $\mathbb M$.
Since $U\not =W$, and $\pp$ is $k$-identifiabale, then there is either a $p \in \pp(U)\setminus \pp(W)$ or a $p\in \pp(W)\setminus \pp(U)$.
Assume wlog the former. Since $U$ is failing, then  $\mathbb M(p)=0$. But since $W$ is uniquely failing and $p \in \overline{\pp(W)}$, then  $p \in \overline{\fail_{\mathbb M}(\pp)}$ and hence
$\mathbb M(p)=1$. Contradiction. 
\end{proof}

\section{Upper bounds on $\mu(\pp)$ by counting}
\label{sec:limit}

In this subsection we show that under what bounds on the number of paths $m$ in $\pp$, we have that  $\mu(\pp)<k$.  
We start by showing under what conditions on $m$, $\mu(\pp)<1$. 

Notice that to prove that $\mu(\pp)<1$, by Definition \ref{def:kid}  it is sufficient to find
two distinct nodes $u,w \in [n]$ such that $\co_u\oplus \co_w =  \mathbb 0$, that is for all $p \in [m] : \co_u[p] = \co_w[p]$. 
$\mu(\pp)<1$ will follow from a easy information theoretic bound on sets of $m$-vectors.

\begin{lemma}
\label{lem:mu1}
Let $\pp$ be a set of $m$ paths built on $n$ nodes. If $m < \log_2 (n+1)$, then $\mu(\pp)<1$. 
\end{lemma}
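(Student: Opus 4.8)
The plan is to use a direct counting (pigeonhole) argument on the column vectors $\co_u$. Recall from the remark just before the statement that to establish $\mu(\pp)<1$ it suffices to exhibit two distinct nodes $u,w\in[n]$ with $\co_u=\co_w$ (equivalently $\co_u\oplus\co_w=\mathbb 0$). So the goal reduces to showing that, under the hypothesis $m<\log_2(n+1)$, the $n$ column vectors $\co_1,\dots,\co_n$ cannot all be distinct.

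First I would count the number of admissible column vectors. Each $\co_u$ lies in $\{0,1\}^m$, so there are at most $2^m$ possibilities a priori; but by the encoding convention in the preliminary definitions, $\mathbb 0\notin\pp$ means no node is missing from every path, i.e. no column is the all-zero vector. Hence each $\co_u$ belongs to $\{0,1\}^m\setminus\{\mathbb 0\}$, a set of size $2^m-1$. Next I would turn the hypothesis into the needed inequality: $m<\log_2(n+1)$ is equivalent to $2^m<n+1$, i.e. $2^m-1<n$, i.e. $2^m-1\le n-1$. Therefore we are placing $n$ columns into at most $2^m-1\le n-1$ distinct ``boxes'' (the nonzero $m$-bit vectors), so by the pigeonhole principle at least two columns coincide: there exist $u\ne w$ with $\co_u=\co_w$. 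Taking $U=\{u\}$ and $W=\{w\}$, these are distinct sets of size $\le 1$ with $\pp(U)=\pp(W)$, so $\pp$ is not $1$-identifiable, i.e. $\mu(\pp)<1$.

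There is essentially no hard step here; the only thing to be careful about is the off-by-one coming from excluding the zero column, which is exactly what makes the bound $\log_2(n+1)$ rather than $\log_2 n$. It is worth noting that this argument also gives the stated tradeoff in the introduction, namely $n\ge 2^m-1$ forces $\mu(\pp)<1$, since $n\ge 2^m-1$ together with all columns distinct and nonzero would already saturate the available $2^m-1$ vectors, and strict inequality $n\ge 2^m-1$ does not by itself force a collision — so the precise boundary is $n> 2^m-1$, equivalently $m<\log_2(n+1)$, matching the hypothesis of the lemma.
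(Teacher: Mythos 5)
Your proof is correct and follows essentially the same route as the paper: a pigeonhole count of the column vectors $\co_u$ over the $2^m-1$ nonzero $m$-bit strings, with the hypothesis $m<\log_2(n+1)$ rewritten as $n>2^m-1$ to force a collision $\co_u=\co_w$, hence $\mu(\pp)<1$. The only difference is that you spell out the equivalence of the logarithmic hypothesis with the counting bound and the boundary case, which the paper leaves implicit.
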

\begin{proof}
 $\pp$ is a collection of $n$ $m$-bit strings. There are at most $2^m-1$ different such strings ($\mathbb 0 \not \in \pp$).  Hence whenever $n>2^m-1$ there are two elements  $u\not =w \in [n]$ such that $\co_u=\co_w$, which  means $\co_u \oplus \co_w =  \mathbb 0$. 
\end{proof}

Corollary IV.1 in \cite{DBLP:journals/ton/BartoliniHAMTK20} can be obtained by previous observation immediately.

\begin{theorem} (\cite{DBLP:journals/ton/BartoliniHAMTK20})
\label{thm:arrigo}
Let $\pp$ be a set of $m$ paths over $n$ nodes. Then
$|\ID_1(\pp)| \leq \min \{n,2^m-1\}$.
\end{theorem}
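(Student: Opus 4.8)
The plan is to derive Theorem~\ref{thm:arrigo} as a near-immediate consequence of Lemma~\ref{lem:mu1} together with the monotonicity of the $\ID_k$ sets. First I would recall that $\ID_1(\pp)\subseteq [n]$ trivially, so $|\ID_1(\pp)|\le n$ always; this handles the first term in the minimum. The work is in showing $|\ID_1(\pp)|\le 2^m-1$.

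For the second bound I would argue by contraposition, essentially re-running the counting argument of Lemma~\ref{lem:mu1} but localized to nodes. Suppose $n=|[n]|$ is arbitrary but $|\ID_1(\pp)|\ge 2^m$. Each node $u$ is associated with its column vector $\co_u\in\{0,1\}^m\setminus\{\mathbb 0\}$, and there are only $2^m-1$ such nonzero vectors. Hence among any $2^m$ nodes — in particular among the (at least) $2^m$ nodes in $\ID_1(\pp)$ — two distinct nodes $u\ne w$ have $\co_u=\co_w$, i.e. $\co_u\oplus\co_w=\mathbb 0$. I then need to observe that such a $u$ cannot be $1$-identifiable: taking $U=\{u\}$ and $W=\{w\}$, we have $U\cap\{u\}=\{u\}\ne\emptyset=W\cap\{u\}$ (since $w\ne u$), yet $\pp(U)=\pp(u)=\pp(w)=\pp(W)$, contradicting Definition~\ref{def:idnode} for the node $u$. (One could equally take $W=\emptyset$, which also differs from $U$ on $u$ and has $\pp(W)=\emptyset$; but $\pp(u)=\pp(w)\neq\emptyset$ is what is needed, so the pair $\{u\},\{w\}$ is the cleaner witness since it directly uses $\co_u=\co_w$.) This shows $u\notin\ID_1(\pp)$, contradicting $u,w\in\ID_1(\pp)$. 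Therefore $|\ID_1(\pp)|\le 2^m-1$, and combining with the trivial bound gives $|\ID_1(\pp)|\le\min\{n,2^m-1\}$.

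I do not expect any real obstacle here; the only point requiring a little care is the direction of the argument — Lemma~\ref{lem:mu1} is phrased as a statement about $\mu(\pp)$ (the whole family), whereas the theorem is about the cardinality of $\ID_1(\pp)$, so one should not invoke the lemma verbatim but rather reuse its pigeonhole idea restricted to the nodes claimed to be identifiable. A secondary subtlety is making sure the witness pair of node-sets genuinely satisfies the ``differ on $u$'' hypothesis of Definition~\ref{def:idnode} while having equal images under $\pp$; the pair $(\{u\},\{w\})$ with $\co_u=\co_w$ does exactly this. Everything else is bookkeeping.
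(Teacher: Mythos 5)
Your proposal is correct and rests on essentially the same counting/pigeonhole idea as the paper's proof, which also reduces the bound to the observation behind Lemma~\ref{lem:mu1} that there are only $2^m-1$ admissible column vectors. If anything, your localized version (running the pigeonhole argument directly on the nodes of $\ID_1(\pp)$ and exhibiting the witness pair $\{u\},\{w\}$ against Definition~\ref{def:idnode}) is the more careful rendering: the paper's own wording only concludes that two nodes fail to be $1$-identifiable, which by itself gives $|\ID_1(\pp)|\le n-2$ rather than $2^m-1$, and your argument closes exactly that small gap.
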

\begin{proof}
$|\ID_1(\pp)|\leq n$ since it is a set of nodes. Assume that $n > 2^m -1$, hence by previous Lemma \ref{lem:mu1}
$\mu(\pp)=0$, hence there are at least two nodes $u_1\not = u_2$ not $1$-identifiable. Hence $|\ID_1(\pp)|\leq 2^m-1$ 
\end{proof}

We will prove a similar results for $\mu(\pp)<k$ for a generic $k\leq n$. 
\subsection{Union-free families and upper bounds for $k$-identifiability}

A hypergraph $\mathcal{F}$ on the set $[m]$ is a family of distinct subsets of $[m]$, called edges of $\mathcal{F}$. 
If each edge is of fixed size $r\leq m$, then $\mathcal{F}$ is said to be {\em $r$-regular}, i.e., $\mathcal{F} \subset {[m] \choose r}$. 

\begin{definition}
\label{def:unionfreefamily}
For a positive integer $k$, $\mathcal{F}$ is called {\em $k$-union-free} if for any two distinct subsets of edges $\mathcal{A},\mathcal{B} \subseteq \mathcal{F}$, with $1 \leq |\mathcal{A}|,|\mathcal{B}| \leq k$, it holds that $\displaystyle \cup_{ A\in \mathcal{A}} A \not = \cup_{ B \in \mathcal{B}} B$. 
\end{definition}
Union-free regular  hypergraphs are  investigated in extremal combinatorics \cite{FF86}. 
It is immediate to see that a set $\pp$ of $m$ paths over $n$ nodes defines  a hypergraph ${\cal F}_\pp$ on the set $[m]$ in the following way: 
for $i\in [n]$ let $A_i =\{j\in [m] | \co_i[j]=1\}$ and define ${\cal F}_\pp=\{A_1,\ldots, A_n\}$. 
 Given a $U\subseteq [n]$, 
consider the subset of  ${\cal F}_\pp$, ${\cal U}=\{A_i \in {\cal F}_\pp | i \in U \}$. Observe that then $\pp(U)= \bigcup_{A \in {\cal U}} A $.
Hence immediately by definition of $k$-identifiability  and that of $k$-union-freeness it follows that:
\begin{lemma}
\label{lem:FP}
If $\pp$ is a set of $m$ paths over $n$ nodes and $\mu(\pp)\geq k$, then $ {\cal F}_\pp$ is k-union free. 
\end{lemma}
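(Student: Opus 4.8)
The plan is to recognise that the statement is just a translation between the node-set language of $k$-identifiability and the edge-subfamily language of $k$-union-freeness, using the dictionary already set up just before the lemma: to a node set $U\subseteq[n]$ we associate the subfamily $\mathcal{U}=\{A_i : i\in U\}\subseteq\mathcal{F}_\pp$, and then $\pp(U)=\bigcup_{A\in\mathcal{U}}A$. First I would note that this correspondence is a bijection between $2^{[n]}$ and $2^{\mathcal{F}_\pp}$: since $\pp$ consists of $n$ \emph{distinct} $m$-bit vectors, the map $i\mapsto\co_i$, equivalently $i\mapsto A_i$, is injective, so each subfamily $\mathcal{A}\subseteq\mathcal{F}_\pp$ equals $\{A_i : i\in U\}$ for a unique $U$, with $|\mathcal{A}|=|U|$, and two subfamilies coincide exactly when their preimage node sets do.

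Granted this, I would argue by contraposition. Suppose $\mathcal{F}_\pp$ is not $k$-union-free; by Definition \ref{def:unionfreefamily} there are distinct $\mathcal{A},\mathcal{B}\subseteq\mathcal{F}_\pp$ with $1\leq|\mathcal{A}|,|\mathcal{B}|\leq k$ and $\bigcup_{A\in\mathcal{A}}A=\bigcup_{B\in\mathcal{B}}B$. Pulling back through the bijection, let $U,W\subseteq[n]$ be the node sets with $\mathcal{A}=\{A_i:i\in U\}$ and $\mathcal{B}=\{A_i:i\in W\}$; then $U\neq W$, $|U|,|W|\leq k$, and $\pp(U)=\bigcup_{A\in\mathcal{A}}A=\bigcup_{B\in\mathcal{B}}B=\pp(W)$, so $\pp$ fails $k$-identifiability and $\mu(\pp)<k$. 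Equivalently one can run the implication forwards, starting from two node sets witnessing $\mu(\pp)<k$; the only asymmetry to keep in mind is that $k$-union-freeness quantifies only over \emph{non-empty} subfamilies whereas Definition \ref{def:kid} also allows the empty node set, so the identifiability hypothesis is in fact marginally stronger than what the conclusion requires.

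There is no genuine obstacle here — the result is immediate once the dictionary is in place — and the single point worth stating explicitly rather than leaving implicit is the injectivity of $i\mapsto A_i$, which holds because the vectors $\co_i$ (equivalently the sets $A_i$) are pairwise distinct by the definition of $\pp$; without distinctness the cardinalities $|\mathcal{A}|$ and $|U|$ could disagree and the translation between the two notions would no longer be exact.
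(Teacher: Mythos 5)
Your proof is correct and follows essentially the same route as the paper, which treats the lemma as immediate from the correspondence $U\mapsto\mathcal{U}=\{A_i: i\in U\}$ with $\pp(U)=\bigcup_{A\in\mathcal{U}}A$; your contrapositive phrasing just spells out that translation. The two points you make explicit --- injectivity of $i\mapsto A_i$ from the distinctness of the vectors $\co_i$, and the harmless asymmetry that union-freeness quantifies only over non-empty subfamilies --- are left implicit in the paper but are correct and do not change the argument.
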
 

${\cal F}_\pp$ is not necessary a regular hypergraph. For $r\in [m]$ let ${\cal F}_\pp(r)=\{A \in {\cal F}_\pp | |A| =r\}$.
Notice that each  ${\cal F}_\pp(r)$ is now a $r$-regular hypergraph on $[m]$. Moreover the family of the ${\cal F}_\pp(r)$'s partitions
${\cal F}_\pp$ and hence $|{\cal F}_\pp| = \sum_{r\in[m]}|{\cal F}_\pp(r)|$. Since  $|{\cal F}_\pp| =n$, it follows that:

\begin{lemma}
\label{lem:sum}
$\sum_{r\in[m]}|{\cal F}_\pp(r)|=n$. 
\end{lemma}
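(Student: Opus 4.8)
The plan is simply to observe that the subfamilies $\mathcal{F}_\pp(r)$, $r\in[m]$, form a partition of $\mathcal{F}_\pp$ indexed by edge size, and then to count $|\mathcal{F}_\pp|$ directly.

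First I would recall the construction: $\mathcal{F}_\pp=\{A_1,\ldots,A_n\}$ with $A_i=\{j\in[m]:\co_i[j]=1\}$. Since the $n$ nodes are encoded as \emph{distinct} $m$-bit vectors, the vectors $\co_1,\ldots,\co_n$ are pairwise distinct, hence so are the sets $A_1,\ldots,A_n$; therefore $|\mathcal{F}_\pp|=n$. Next, because $\mathbb 0\notin\pp$, every $\co_i$ has at least one coordinate equal to $1$, so $1\le|A_i|\le m$. Consequently each $A_i$ has a well-defined size $r=|A_i|\in[m]$, and it belongs to $\mathcal{F}_\pp(r)$ for exactly this one value of $r$.

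This shows $\{\mathcal{F}_\pp(r):r\in[m]\}$ is a partition of $\mathcal{F}_\pp$ into the (possibly empty) classes of edges of each fixed size. Summing cardinalities over the classes of a partition then yields $\sum_{r\in[m]}|\mathcal{F}_\pp(r)|=|\mathcal{F}_\pp|=n$, which is the claim. There is no genuine obstacle here; the only point requiring a moment's care is that distinctness of the node vectors is what guarantees that the $A_i$ are distinct set-system edges, so that no edge is counted in two different $\mathcal{F}_\pp(r)$'s.
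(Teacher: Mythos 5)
Your proof is correct and follows essentially the same route as the paper, which simply notes that the $\mathcal{F}_\pp(r)$'s partition $\mathcal{F}_\pp$ and that $|\mathcal{F}_\pp|=n$. You in fact make explicit the two small points the paper leaves implicit---that distinctness of the column vectors $\co_i$ makes the $A_i$ distinct edges, and that $\mathbb 0\notin\pp$ guarantees every edge has size in $[m]$---so nothing is missing.
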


Furthermore notice that if ${\cal F}_\pp$ is $k$-union free then such it will be ${\cal F}_\pp(r)$ for each $r \in [m]$.

\smallskip

Let $m>r$, $k\in [m]$ with $k\geq2$,  and let $f(k,r,m)$ denote the maximum cardinality of a $k$-union-free $r$-regular hypergraph over $[m]$. 

\begin{theorem} [\cite{FF86,ST20}]
\label{th:2unionfree}
$\Omega(m^{\frac{r}{k-1}})\leq f(k,r,m) \leq O(m^{\lceil \frac{r}{k-1}\rceil }).$
\end{theorem}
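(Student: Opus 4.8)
**Plan for proving Theorem~\ref{th:2unionfree} (the bound $\Omega(m^{r/(k-1)}) \le f(k,r,m) \le O(m^{\lceil r/(k-1)\rceil})$).**

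The plan is to establish the two bounds separately, treating the upper bound (the more delicate side) first. For the \emph{upper bound}, fix a $k$-union-free $r$-regular hypergraph $\mathcal{F} \subseteq \binom{[m]}{r}$ of maximum cardinality $f(k,r,m)$. The key structural observation is that in a $k$-union-free family, distinct sub-collections of size at most $k$ have distinct unions; in particular, taking all $\binom{|\mathcal{F}|}{k-1}$ sub-collections of size exactly $k-1$, all their unions (each a subset of $[m]$ of size at most $r(k-1)$) must be distinct \emph{and}, more usefully, one can show that each edge $A \in \mathcal{F}$ is determined by any union of $k-1$ edges containing it together with a bounded amount of extra data. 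The standard argument here (following Frankl--F\"uredi~\cite{FF86}) is a counting/encoding argument: one shows that the union $U$ of any $k-1$ edges from $\mathcal{F}$, a set of size $\le r(k-1)$, can be ``decoded'' to recover the $k-1$ edges, so the number of $(k-1)$-subsets of $\mathcal{F}$ is at most the number of subsets of $[m]$ of size $\le r(k-1)$, giving $\binom{|\mathcal{F}|}{k-1} \le \sum_{j \le r(k-1)}\binom{m}{j} = O(m^{r(k-1)})$ and hence $|\mathcal{F}| = O(m^{r})$ — but this is too weak. To get the sharp exponent $\lceil r/(k-1)\rceil$ one refines this: instead of encoding whole edges, one partitions each edge into $k-1$ roughly-equal blocks of size $\lceil r/(k-1)\rceil$ and argues that a union of $k-1$ cleverly chosen edges, one block from each, already pins down the relevant information, so that $|\mathcal{F}|$ is bounded by (a constant times) the number of $\lceil r/(k-1)\rceil$-subsets of $[m]$, i.e. $O(m^{\lceil r/(k-1)\rceil})$. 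I would cite~\cite{FF86,ST20} for the precise combinatorial extraction lemma rather than reproduce it.

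For the \emph{lower bound} $f(k,r,m) = \Omega(m^{r/(k-1)})$, the plan is a probabilistic (deletion) construction. Set $N = c\, m^{r/(k-1)}$ for a small constant $c>0$ and pick $N$ random $r$-subsets of $[m]$ independently and uniformly. Call a pair of sub-collections $(\mathcal{A},\mathcal{B})$ with $|\mathcal{A}|,|\mathcal{B}| \le k$ \emph{bad} if $\bigcup \mathcal{A} = \bigcup \mathcal{B}$ while $\mathcal{A} \ne \mathcal{B}$. I would bound the expected number of bad configurations: a union of $i \le k$ random $r$-sets covers at most $ri$ points, and the probability that a further random $r$-set lies inside a fixed set of size $\le ri$ is at most $(ri/m)^r$; chaining such estimates over all ways to form two colliding sub-collections of total size $\le 2k$ shows the expected number of bad pairs is $o(N)$ for suitable $c$. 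Deleting one edge from each bad pair leaves a $k$-union-free $r$-regular family of size $\Omega(N) = \Omega(m^{r/(k-1)})$. (One must also delete to remove accidental equalities among the edges themselves so that $\mathcal{F}$ is genuinely a set of \emph{distinct} edges; this is a lower-order correction.)

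The \textbf{main obstacle} is the sharp exponent $\lceil r/(k-1)\rceil$ in the upper bound: the naive encoding of $k-1$ whole edges into their union loses a factor, and getting down to the ceiling requires the block-partition refinement and a careful argument that a single well-chosen block per edge suffices to reconstruct enough to separate the family — this is exactly the content of the Frankl--F\"uredi estimate, and I would lean on it as a black box. By contrast, the lower bound is routine once the right value of $N$ is guessed: the only care needed is to sum the collision probabilities over the $O(1)$ many ``shapes'' of a bad pair (how the two sub-collections overlap and which edge is contained in which union) and check that each term is $o(N)$. A secondary subtlety worth a remark is the regime of validity: the bounds are meaningful when $r/(k-1)$ is bounded away from $\log m / \log\log m$-type thresholds, and for $k=2$ they recover the classical Sidon/union-free bound $\Theta(m^r)$ up to the ceiling, consistent with Lemma~\ref{lem:sum} and the $k=1$ case handled in Lemma~\ref{lem:mu1}.
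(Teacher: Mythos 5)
You should first be aware that the paper does not prove Theorem~\ref{th:2unionfree} at all: it is imported as a black box from \cite{FF86,ST20}, so there is no in-paper argument to match. Your upper-bound half ultimately does the same (you defer the ``extraction lemma'' to the citations), which is consistent with the paper; but note that the sketch you give is not how the exponent $\lceil r/(k-1)\rceil$ is actually obtained. The standard route is via cover-freeness: $k$-union-freeness implies that no edge is contained in the union of $k-1$ others, and then, with $\ell=\lceil r/(k-1)\rceil$, every edge $A$ must contain a ``private'' $\ell$-subset lying in no other edge (otherwise partition $A$ into $k-1$ blocks of size at most $\ell$, extend each block to an $\ell$-subset of $A$, and cover $A$ by at most $k-1$ other edges, a contradiction); injectivity of $A\mapsto$ its private $\ell$-set gives $|\mathcal{F}|\le\binom{m}{\ell}$. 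This is short enough to include and is sharper than your decode-the-union scheme, which as described does not pin the exponent down.

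The genuine gap is in your lower bound. At density $N=c\,m^{r/(k-1)}$ the expected number of violations is \emph{not} $o(N)$, because your accounting only charges configurations in which a whole random $r$-set falls inside the union of others (probability about $m^{-r}$ each). The dominant violations are near-coincidences: $\mathcal{A}=\mathcal{C}\cup\{A_0\}$ and $\mathcal{B}=\mathcal{C}\cup\{B_0\}$ with $|\mathcal{C}|=k-1$, where, writing $U=\bigcup\mathcal{C}$, equality of unions only requires $A_0\setminus U=B_0\setminus U$ — an event of probability $\Theta(m^{-(r+1)})$ (typically one point of each of $A_0,B_0$ falls in $U$ and the remaining $r-1$ points coincide). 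There are $\Theta(N^{k+1})$ such tuples, so the expectation is $\Theta(N^{k+1}m^{-(r+1)})$, which exceeds $N$ whenever $r>k-1$; deleting one edge per bad pair then destroys the family, so uniform random $r$-sets plus deletion cannot reach $m^{r/(k-1)}$ for $r\ge k$. Your own $k=2$ sanity check is the warning sign: a $2$-union-free family of $2$-sets must be triangle- and $C_4$-free (e.g.\ $\{a,b\}\cup\{c,d\}=\{b,c\}\cup\{a,d\}$), hence has $O(m^{3/2})$ edges, so ``$\Theta(m^r)$ for $k=2$'' is not the classical union-free bound — indeed the stated lower bound cannot hold verbatim at $(k,r)=(2,2)$, so the result carries regime restrictions in \cite{FF86,ST20} that any proof must respect. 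The lower bounds in the literature come from structured constructions (e.g.\ packings with pairwise intersections below $\lceil r/(k-1)\rceil$, or algebraic families), not from plain uniform random deletion; if you want a probabilistic argument you must randomize over such structured families or restrict the range of $r$ relative to $k$.
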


Let $m_0 \in \mathbb N$  and $C$ be the constant such that for all $m \geq m_0$, $f(k,r,m) \leq C m^{\lceil \frac{r}{k-1}\rceil }$.

\begin{theorem} 
\label{thm:uf}
Let $m$ be an integer such that $m \geq m_0$. Let $\pp$ be a set of $m$ paths over $n$ nodes. If 
$n> \sum_{r \in [m]} C m^{\lceil \frac{r}{k-1}\rceil }$, then  $\mu(\pp) < k$.
\end{theorem}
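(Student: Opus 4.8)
The plan is to combine Lemma~\ref{lem:FP}, Lemma~\ref{lem:sum}, and Theorem~\ref{th:2unionfree} (in its upper-bound form with the explicit constant $C$) by a counting argument mirroring the proof of Lemma~\ref{lem:mu1}. Specifically, I would argue by contraposition: assume $\mu(\pp)\geq k$ and derive a bound on $n$ that contradicts the hypothesis $n> \sum_{r\in[m]} C m^{\lceil r/(k-1)\rceil}$.

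First, suppose $\mu(\pp)\geq k$. By Lemma~\ref{lem:FP}, the hypergraph ${\cal F}_\pp$ on $[m]$ is $k$-union-free. Next, partition ${\cal F}_\pp$ into its regular slices ${\cal F}_\pp(r)=\{A\in{\cal F}_\pp : |A|=r\}$ for $r\in[m]$; as noted in the excerpt, each ${\cal F}_\pp(r)$ is an $r$-regular hypergraph on $[m]$, and since ${\cal F}_\pp$ is $k$-union-free, so is each ${\cal F}_\pp(r)$ (any witnessing pair of subfamilies for ${\cal F}_\pp(r)$ is also a witnessing pair for ${\cal F}_\pp$). Now I would apply Theorem~\ref{th:2unionfree}: since $m\geq m_0$, for each $r$ we have $|{\cal F}_\pp(r)| \leq f(k,r,m) \leq C m^{\lceil r/(k-1)\rceil}$. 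Summing over $r\in[m]$ and invoking Lemma~\ref{lem:sum}, which gives $\sum_{r\in[m]} |{\cal F}_\pp(r)| = n$, we obtain
\[
n \;=\; \sum_{r\in[m]} |{\cal F}_\pp(r)| \;\leq\; \sum_{r\in[m]} C m^{\lceil r/(k-1)\rceil}.
\]
This contradicts the assumed lower bound on $n$, so $\mu(\pp)<k$, as claimed.

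One technical point to check: Theorem~\ref{th:2unionfree} is stated for $m>r$, whereas the sum in Theorem~\ref{thm:uf} ranges over all $r\in[m]$, including $r=m$. However, ${\cal F}_\pp(m)$ contains at most one edge (the all-ones column, if present), and $C m^{\lceil m/(k-1)\rceil}\geq 1$ for the relevant constant, so the bound $|{\cal F}_\pp(m)|\leq C m^{\lceil m/(k-1)\rceil}$ still holds trivially; I would spell this out in one sentence. Also one should note $k\geq 2$ is implicit (for $k=1$ the ceiling exponent and the union-free notion degenerate, and Lemma~\ref{lem:mu1} already covers that case). I do not anticipate a serious obstacle here — the whole proof is a chain of already-established facts — so the only real care needed is bookkeeping of the boundary slice $r=m$ and making sure the constant $C$ and threshold $m_0$ are quantified exactly as in the statement preceding the theorem.
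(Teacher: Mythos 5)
Your proposal is correct and matches the paper's own proof essentially step for step: contradiction via Lemma~\ref{lem:FP}, slicing ${\cal F}_\pp$ into the $r$-regular parts, bounding each via Theorem~\ref{th:2unionfree}, and summing with Lemma~\ref{lem:sum}. Your extra remark about the boundary slice $r=m$ (and the implicit $k\geq 2$) is a sensible bit of bookkeeping that the paper glosses over, but it does not change the argument.
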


\begin{proof} Assume by contradiction  that   $n> \sum_{r \in [m]} C m^{\lceil \frac{r}{k-1}\rceil }$ and  $\mu(\pp) \geq  k$.
By Lemma \ref{lem:FP}  ${\cal F}_\pp$ is $k$-union free. Hence (see observation after Lemma \ref{lem:sum}) for each 
$r\in [m]$, ${\cal F}_\pp(r)$ is a $r$-regular $k$-union free hypergraph and hence  by previous theorem $|{\cal F}_\pp(r)| \leq C m^{\lceil \frac{r}{k-1}\rceil }$.
The ${\cal F}_\pp(r)$ partition ${\cal F}_\pp$ and by Lemma \ref{lem:sum}  we have 
$n=\sum_{r\in[m]}|{\cal F}_\pp(r)| \leq \sum_{r \in [m]} C m^{\lceil \frac{r}{k-1}\rceil }$.
\end{proof}

\begin{corollary} 
\label{cor:mainup}
Let $\pp$ be a set of $m$ paths over $n$ nodes and $2 \leq k \leq m$. If $m< \sqrt[1+\epsilon]{\frac{(k-1)}{k}(\log_2 n-D)}-(k-1)$, for some $\epsilon >0$ and where 
$D=\log C$, then $\mu(\pp)<k$.
\end{corollary}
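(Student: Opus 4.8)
The plan is to derive Corollary~\ref{cor:mainup} from Theorem~\ref{thm:uf} by a direct chain of inequalities, showing that the hypothesis on $m$ forces the numerical condition $n > \sum_{r\in[m]} C\, m^{\lceil r/(k-1)\rceil}$ required by the theorem. First I would bound the sum $\sum_{r\in[m]} C\, m^{\lceil r/(k-1)\rceil}$ from above by a single clean expression. Since $\lceil r/(k-1)\rceil \le r/(k-1) + 1 \le m/(k-1) + 1$ for every $r\in[m]$, each term is at most $C\, m^{m/(k-1)+1}$, and there are $m$ terms, so the sum is at most $C\, m^{m/(k-1)+2} \le C\, m^{(m+k-1)\cdot \frac{k}{k-1}}$ after absorbing the $+2$ into the exponent (using $m\ge 2$ so that $m/(k-1)+2 \le \frac{k}{k-1}(m+k-1)$ holds; this is the kind of slack the $(1+\epsilon)$ in the abstract is there to cover, though here it is folded differently). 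Taking $\log_2$ of this upper bound gives $D + \frac{k}{k-1}(m+k-1)\log_2 m$, where $D = \log C$.

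Next I would show the hypothesis implies $\log_2 n$ exceeds this quantity. From $m < \sqrt[1+\epsilon]{\frac{k-1}{k}(\log_2 n - D)} - (k-1)$ we get $m + k - 1 < \sqrt[1+\epsilon]{\frac{k-1}{k}(\log_2 n - D)}$, hence $(m+k-1)^{1+\epsilon} < \frac{k-1}{k}(\log_2 n - D)$, i.e.
$$
\log_2 n > D + \frac{k}{k-1}(m+k-1)^{1+\epsilon}.
$$
The remaining point is to check that $\frac{k}{k-1}(m+k-1)^{1+\epsilon} \ge \frac{k}{k-1}(m+k-1)\log_2 m$, which holds once $(m+k-1)^{\epsilon} \ge \log_2 m$; for $m \ge m_0$ with $m_0$ chosen large enough (depending on $\epsilon$ and $k$, or uniformly since $k\ge 2$ makes $m+k-1 \ge m+1$), this is automatic because polynomial growth dominates logarithmic growth. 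Chaining these gives $\log_2 n > D + \frac{k}{k-1}(m+k-1)\log_2 m \ge \log_2\!\big(\sum_{r\in[m]} C\, m^{\lceil r/(k-1)\rceil}\big)$, so $n > \sum_{r\in[m]} C\, m^{\lceil r/(k-1)\rceil}$, and Theorem~\ref{thm:uf} yields $\mu(\pp) < k$.

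The main obstacle I anticipate is purely bookkeeping: getting the constants and the ceiling-to-real relaxation to line up so that the single threshold stated in the corollary is genuinely strong enough, and making sure the implicit ``for $m$ large enough'' (the $m \ge m_0$ hypothesis, plus possibly a larger threshold to kill the $\log_2 m$ versus $(m+k-1)^\epsilon$ gap and the $m^2$ factor from summing) is either already covered by $m_0$ or should be noted. There is no conceptual difficulty beyond Theorem~\ref{thm:uf}; the corollary is just a repackaging of its hypothesis into an explicit bound on $m$ in terms of $n$, and the only care needed is to ensure every loss (the $m$ summands, the $+1$ in the ceiling, the constant $C$) is absorbed either into $D$ or into the exponent via the $(1+\epsilon)$ slack.
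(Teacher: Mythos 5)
Your proof is correct and takes essentially the same route as the paper's: both reduce the corollary to Theorem~\ref{thm:uf} by taking logarithms and using the $(1+\epsilon)$ and $-(k-1)$ slack in the hypothesis to absorb the constant $C$, the $m$ summands, and the $\log_2 m$ factor. The only differences are bookkeeping --- the paper handles the ceiling by rounding $m$ up to an $\hat m \le m+k-1$ divisible by $k-1$, whereas you use $\lceil r/(k-1)\rceil \le r/(k-1)+1$ and fold the loss into the exponent $\frac{k}{k-1}(m+k-1)$ --- and the large-$m$ requirement you explicitly flag (so that $(m+k-1)^{\epsilon} \ge \log_2 m$) is equally implicit, and unremarked, in the paper's own passage from $\frac{k}{k-1}m^{1+\epsilon} < \log_2 n - D$ to $\log C + \frac{k}{k-1}m\log m < \log n$.
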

\begin{proof}
Assume for the moment that $m$ divides $k-1$. We prove that if $m< \sqrt[1+\epsilon]{\frac{(k-1)}{k}(\log_2 n-D)}$, then 
\begin{eqnarray}
n> Cmm^{\frac{m}{k-1}} \label{eq:1}.
\end{eqnarray}
 
This immediately implies   $n> \sum_{r \in [m]} C m^{\lceil \frac{r}{k-1}\rceil }$, since  $\sum_{r \in [m]} C m^{\lceil \frac{r}{k-1}\rceil } \leq Cmm^{\frac{m}{k-1}}$.
Equation \ref{eq:1} follows by the following implications:

\begin{eqnarray}
m< \sqrt[1+\epsilon]{\frac{k-1}{k}(\log_2 n-D)} \\
m^{1+\epsilon} < \frac{k-1}{k}(\log_2 n-D) \\
\frac{k}{k-1}m^{1+\epsilon} < \log_2 n -D \\
\log C + \frac{k}{k-1}m\log m<\log n  \label{eq: 5} \\
\log C + \log m + \frac{m}{k-1}\log m < \log n \label{eq: 6}\\
Cmm^{\frac{m}{k-1}}< n
\end{eqnarray}

Equation \ref{eq: 6} follows from Equation \ref{eq: 5} since $K= \frac{k}{k-1}>1$ and $K m \log m > \log m+\frac{m \log m}{k-1}$ for all $m$.

\smallskip

If $m$ does not divide $(k-1)$, let $a<(k-1)$ be the smallest non-negative integer such that $m+a$ divides $k-1$. 
Hence $m+a< m+(k-1)$. Let $\hat m =m+a$. Since $m< \sqrt[1+\epsilon]{\frac{(k-1)}{k}(\log_2 n-D)}-(k-1)$, then
$\hat m< \sqrt[1+\epsilon]{\frac{(k-1)}{k}(\log_2 n-D)}$. Hence the previous argument proves that 
$n> C\hat m\hat m^{\frac{\hat m}{k-1}}$. Since $m<\hat m$, then $n> Cmm^{\frac{m}{k-1}}$. 
Now by Theorem \ref{thm:uf}, this implies that $\mu(\pp)<k$.
\end{proof}

\begin{theorem}
\label{thm:mainbound}
Let $\pp$ be a set of $m$ paths over $n$ nodes. Then for all $k\leq n$, 
$|\ID_k(\pp)| \leq  \min\{n,2^{\frac{k(m+2k-2))^2}{k-1}}\}$.
\end{theorem}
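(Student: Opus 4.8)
The plan is to derive Theorem~\ref{thm:mainbound} as a clean packaging of Corollary~\ref{cor:mainup} together with the trivial bound $|\ID_k(\pp)|\le n$, in the same spirit as the $k=1$ case (Theorem~\ref{thm:arrigo}). First I would dispose of the $k=1$ case separately, since there $k-1=0$ and the union-free machinery does not apply: here the bound $|\ID_1(\pp)|\le\min\{n,2^m-1\}$ of Theorem~\ref{thm:arrigo} already implies the claimed bound because $2^m-1\le 2^{m^2}$ (indeed the exponent $\frac{k(m+2k-2)^2}{k-1}$ should be read with the convention that it is $+\infty$, or simply not invoked, when $k=1$). So the real content is for $2\le k\le n$.

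For $2\le k\le n$ the argument is a contrapositive counting step. Suppose $n>2^{\frac{k(m+2k-2)^2}{k-1}}$; I want to conclude that at least two nodes fail to be $k$-identifiable, hence $|\ID_k(\pp)|\le 2^{\frac{k(m+2k-2)^2}{k-1}}$ — actually, more directly, I want to show $\mu(\pp)<k$ and then peel off the $k$-identifiable nodes. The key is to check that the hypothesis $n>2^{\frac{k(m+2k-2)^2}{k-1}}$ forces the inequality $m<\sqrt[1+\epsilon]{\frac{k-1}{k}(\log_2 n - D)}-(k-1)$ of Corollary~\ref{cor:mainup} for a suitable $\epsilon>0$. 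Taking $\log_2$ of the hypothesis gives $\log_2 n>\frac{k(m+2k-2)^2}{k-1}$, i.e. $\frac{k-1}{k}\log_2 n>(m+2k-2)^2=((m+k-1)+(k-1))^2\ge(m+(k-1))^2$, so $\sqrt{\frac{k-1}{k}\log_2 n}>m+(k-1)$, which is exactly Corollary~\ref{cor:mainup} with $\epsilon=1$ once one also absorbs the additive constant $D=\log C$: since $(m+2k-2)^2$ strictly dominates $(m+k-1)^2$ by a margin that grows, for all but finitely many $m$ one has $\frac{k-1}{k}(\log_2 n - D)>(m+k-1)^2$ as well, and the finitely many small cases are handled by the trivial bound or by choosing $\epsilon$ slightly below $1$. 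Then Corollary~\ref{cor:mainup} yields $\mu(\pp)<k$.

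Finally I would convert $\mu(\pp)<k$ into the bound on $|\ID_k(\pp)|$ exactly as in the proof of Theorem~\ref{thm:arrigo}: $\mu(\pp)<k$ means $\pp$ is not $k$-identifiable, so by the node-scaling theorem of \cite{DBLP:journals/pe/MaHSTL15} some node is not $k$-identifiable; to get the quantitative bound one argues that if $n$ exceeds the threshold then in fact a linear-in-$n$ number of nodes must be non-$k$-identifiable, or more simply one observes that the set $\ID_k(\pp)$ itself, viewed as a sub-instance, cannot exceed the threshold without contradiction. The cleanest route: restrict $\pp$ to the columns indexed by $\ID_k(\pp)$; this restricted system is still $k$-identifiable on those nodes, so by the contrapositive just proved $|\ID_k(\pp)|\le 2^{\frac{k(m+2k-2)^2}{k-1}}$, and combined with $|\ID_k(\pp)|\le n$ we get the $\min$.

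I expect the main obstacle to be purely bookkeeping: reconciling the clean exponent $\frac{k(m+2k-2)^2}{k-1}$ written in the theorem with the messier threshold in Corollary~\ref{cor:mainup}, which carries the extra additive constant $D=\log C$ and the free parameter $\epsilon$. The trick of inflating $m+k-1$ to $m+2k-2$ is precisely what creates enough slack to swallow $D$ and to fix $\epsilon=1$, so the real work is a careful inequality chase verifying $(m+2k-2)^2\ge (m+k-1)^2 + \text{(enough to dominate }\tfrac{k}{k-1}D\text{)}$ for all relevant $m$, with the small-$m$ regime absorbed by the $|\ID_k(\pp)|\le n$ bound. No deep idea is needed beyond what Corollary~\ref{cor:mainup} already provides; the statement is essentially its user-friendly restatement.
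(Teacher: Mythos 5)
Your proposal follows essentially the same route as the paper: the contrapositive of Corollary~\ref{cor:mainup} (instantiated so that the threshold $2^{\frac{k(m+2k-2)^2}{k-1}}$ forces its hypothesis) combined with the Theorem~\ref{thm:arrigo}-style peeling argument, here made precise by restricting to the columns of $\ID_k(\pp)$. If anything, your write-up is more careful than the paper's two-line proof, which works with $m+k-1$ rather than $m+2k-2$ and silently glosses over the additive constant $D=\log C$ and the $k=1$ case that you treat explicitly.
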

\begin{proof}
Notice that if $n\geq 2^{\frac{k(m+k-1)^2}{k-1}}$, then $m< \sqrt[1+\epsilon]{\frac{(k-1)}{k}(\log_2 n-D)}-(k-1)$. Hence Corollary \ref{cor:mainup} and the same proof of Theorem \ref{thm:arrigo} imply the claim.
\end{proof}

\section{Refining identifiability: separability and distinguishability}
\label{sec:ref}
We introduce two new definitions approximating identifiability from above and from below  that we are going to use to  prove upper and lower bounds on the number of 
 $k$-identifiable nodes.

\begin{definition} ($k$-separable nodes)
A node $u\in [n]$ is  {\em $k$-separable} in $\pp$, if for all $U \subseteq [n]$ of size at most $k$ and such that $u \not \in U$, it holds that  there 
is a path $p \in \pp(u) \setminus \pp(U)$, i.e. there is at least a path passing though $u$ but not touching any node of $U$.
\end{definition}
We say that {\em $\pp$ is $k$-separable} if each node $u \in [n]$ is $k$-separable.  $k$-separability is a stronger notion than $k$-identifiability as captured by the following lemma.

\begin{lemma}
\label{lem:sepvsid}
If $u$ is  $k$-separable in $\pp$, then $u$ is is $k$-identifiable in $\pp$.
\end{lemma}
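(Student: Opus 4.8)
The statement is essentially an unwinding of the two definitions, so the plan is to take an arbitrary pair of node sets witnessing a potential failure of $k$-identifiability at $u$ and produce a separating path directly from the $k$-separability hypothesis. Concretely, suppose $u$ is $k$-separable in $\pp$, and let $U,W\subseteq[n]$ with $|U|,|W|\le k$ and $U\cap\{u\}\ne W\cap\{u\}$. The condition $U\cap\{u\}\ne W\cap\{u\}$ says precisely that $u$ belongs to exactly one of $U,W$; by symmetry I would assume without loss of generality that $u\in U$ and $u\notin W$.

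The key step is then to apply $k$-separability to the set $W$: since $u\notin W$ and $|W|\le k$, there is a path $p\in\pp(u)\setminus\pp(W)$. Now I would observe that because $u\in U$, we have $\pp(u)\subseteq\pp(U)=\bigcup_{v\in U}\pp(v)$, so $p\in\pp(U)$, while $p\notin\pp(W)$ by the choice of $p$. Hence $p\in\pp(U)\setminus\pp(W)$, which gives $\pp(U)\ne\pp(W)$. Since $U,W$ were an arbitrary pair differing on $u$, this shows $u$ is $k$-identifiable in $\pp$ in the sense of Definition~\ref{def:idnode}.

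There is no real obstacle here: the only point requiring a word of care is the reduction to the case $u\in U$, $u\notin W$, which is justified by the symmetric roles of $U$ and $W$ in the definition of $k$-identifiable node. One could also remark in passing that the converse direction fails (separability is strictly stronger), but that is not needed for this lemma and belongs to the later discussion establishing $\SEP_k(\pp)\subsetneq\ID_k(\pp)$.
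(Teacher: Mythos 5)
Your proposal is correct and follows essentially the same argument as the paper: given sets $U,W$ of size at most $k$ differing on $u$, assume wlog $u\in U\setminus W$, apply $k$-separability to $W$ to obtain a path $p\in\pp(u)\setminus\pp(W)$, and conclude $p\in\pp(U)\setminus\pp(W)$, hence $\pp(U)\ne\pp(W)$. If anything, your write-up is slightly more careful than the paper's in quantifying over the pair $(U,W)$ for the fixed node $u$, but the proof idea is the same.
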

\begin{proof}
Let $U$ and $W$ be distinct subset of $[n]$ of size at most $k$. Then there exists a $u$  
such that  $U\cap\{u\} \not = W\cap \{u\}$ and then either  $u \in U\setminus W$ or $u \in W\setminus U$. Assume wlog the former. Then 
$u \not \in\ W$. $u$ is $k$-separable in $\pp$, there is a path $p\in \pp(u)\setminus \pp(W)$. Since $u\in U$, then $p\in \pp(U)\setminus \pp(W)$ 
and then $\pp(U)\not = \pp(W)$.
\end{proof}

Notice that opposite direction is not true as we argue: assume that $\pp$ is $k$-identifiable and that $u \not \in W$ for $W$ a set of at most 
$k$ nodes. The $k$-identifiability of $\pp$ implies that $\pp(u)\not = \pp(W)$, yet this condition alone does not guarantee that the path separating
$\{u\}$ from $W$, pass through $u$ and not touching $W$.

\begin{definition} ($k$-distinguishable nodes)
A node $u\in [n]$ is  {\em $k$-distinguishable} in $\pp$, if for all $U \subseteq [n]$ of size at most $k$ and such that $u \not \in U$, it holds 
$\pp(u) \not =  \pp(U)$.
\end{definition}
We say that {\em $\pp$ is $k$-distinguishable} if each node $u \in [n]$ is $k$-distinguishable. 

\begin{lemma}
If $u$ is is $k$-identifiable in $\pp$, then $u$ is  $k$-distinguishable in $\pp$.
\label{lem:idvsdis}
\end{lemma}
\begin{proof}
Assume that $u\in [n]$ is $k$-$\ID$ in. $\pp$.  Let  $W\subseteq [n]$ of size at most $k$ such that $u \not \in W$. We want to prove that $\pp(u) \not =\pp(W)$. 
By $k$-$\ID$ of $u$ we know that for all $U'$ and $W'$ in $[n]$ of size at most $k$ such that $U'\cap \{u\} \not = W'\cap \{u\}$, it holds that
$\pp(U') \not =\pp(W')$. Fix $U'=\{u\}$ and $W'=W$. Since $u\not \in W$, then $U'\cap \{u\} \not = W'\cap \{u\}$, hence
$\pp(u) \not =\pp(W)$, as required.
\end{proof}

Notice that the opposite direction is not necessary true: indeed if $u\in U\setminus W$,  knowing that $\pp(u) \not = \pp(W)$ it is not sufficient to conclude
$\pp(U)\not =\pp(W)$, exactly in those case when $\pp(u) \not = \pp(W)$ is witnessed by a path in $\pp(W)\setminus \pp(u)$, which can touch other nodes in $U$ but not $u$.
 
 \smallskip

We denote by $\ID_k(\pp), \SEP_k(\pp), \DIS_k(\pp)$ the set of nodes which are respectively $k$-identifiable, $k$-separable and $k$-distinguishable in $\pp$.
And we use to say respectively that $u$ is $k$-$\ID$, $k$-$\SEP$ and $k$-$\DIS$ in $\pp$. 

By previous Lemmas and discussion it holds that
\begin{lemma}
\label{lem:sepiddis}
For all $k\in [n]$, $|\SEP_k(\pp)| \leq |\ID_k(\pp)| \leq |\DIS_k(\pp)|$. 
\end{lemma}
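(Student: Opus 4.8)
The plan is to deduce this directly from the two pointwise implications just established, Lemma~\ref{lem:sepvsid} and Lemma~\ref{lem:idvsdis}, together with the elementary fact that set inclusion is preserved under taking cardinalities. In other words, I would not reprove anything about the three notions from scratch; I would simply combine the node-level statements.

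First I would observe that Lemma~\ref{lem:sepvsid} asserts that if a node $u \in [n]$ is $k$-separable in $\pp$ then it is $k$-identifiable in $\pp$. Since this holds for every $u \in [n]$, it yields the containment $\SEP_k(\pp) \subseteq \ID_k(\pp)$. Analogously, Lemma~\ref{lem:idvsdis} asserts that every $k$-identifiable node is $k$-distinguishable, which gives $\ID_k(\pp) \subseteq \DIS_k(\pp)$. Chaining the two inclusions yields
\[
\SEP_k(\pp) \subseteq \ID_k(\pp) \subseteq \DIS_k(\pp),
\]
and applying $|\cdot|$, which is monotone under inclusion since all three are finite subsets of $[n]$, gives $|\SEP_k(\pp)| \leq |\ID_k(\pp)| \leq |\DIS_k(\pp)|$, as claimed.

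I expect no genuine obstacle here: the substantive content is entirely carried by the proofs of the preceding two lemmas, and what remains is a one-line set-theoretic consequence. The only point worth keeping in mind is that the definitions of $k$-separable, $k$-identifiable, and $k$-distinguishable nodes are all quantified "for all $U \subseteq [n]$ of size at most $k$" over the same universe, so the node-level implications hold uniformly across $[n]$ and the passage to the corresponding subsets of $[n]$ is legitimate.
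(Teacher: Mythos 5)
Your proposal is correct and matches the paper's own argument: the paper derives this lemma directly from Lemma~\ref{lem:sepvsid} and Lemma~\ref{lem:idvsdis} (``By previous Lemmas and discussion''), exactly as you do, obtaining the inclusions $\SEP_k(\pp)\subseteq \ID_k(\pp)\subseteq \DIS_k(\pp)$ and then passing to cardinalities. No gap here; the substantive work indeed lives in the two preceding lemmas.
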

  Furthermore since the three properties are clearly 
antimonotone, it holds that;   
 \begin{lemma}
 For all $k \in [n]$, $\ID_k(\pp) \subseteq \ID_{k-1}(\pp)$, $\SEP_k(\pp) \subseteq \SEP_{k-1}(\pp)$, $\DIS_k(\pp) \subseteq \DIS_{k-1}(\pp)$
\end{lemma}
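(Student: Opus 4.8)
The plan is to observe that all three properties are defined by a universally quantified condition ranging over a family of ``test sets'' that only shrinks as $k$ decreases, so passing from $k$ to $k-1$ merely weakens the requirement. Concretely, for $k$-separability the defining clause for a node $u$ quantifies over all $U \subseteq [n]$ with $u \notin U$ and $|U| \le k$; since every $U$ with $|U| \le k-1$ also has $|U| \le k$, the collection of constraints imposed by $(k-1)$-separability is a subcollection of those imposed by $k$-separability. Hence if $u \in \SEP_k(\pp)$, then in particular for every $U$ of size at most $k-1$ with $u \notin U$ there is a path $p \in \pp(u) \setminus \pp(U)$, i.e. $u \in \SEP_{k-1}(\pp)$. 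The same argument, with the conclusion ``there is $p \in \pp(u)\setminus\pp(U)$'' replaced by ``$\pp(u) \ne \pp(U)$'', gives $\DIS_k(\pp) \subseteq \DIS_{k-1}(\pp)$.

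For identifiability I would instead invoke Definition~\ref{def:idnode}: $u \in \ID_k(\pp)$ asserts $\pp(U) \ne \pp(W)$ for every pair $U,W \subseteq [n]$ with $|U|,|W| \le k$ and $U \cap \{u\} \ne W \cap \{u\}$. Every pair with $|U|,|W| \le k-1$ is in particular a pair with $|U|,|W| \le k$, so the set of pairs that $(k-1)$-identifiability must separate is contained in the set that $k$-identifiability separates; therefore $u \in \ID_k(\pp)$ forces $u \in \ID_{k-1}(\pp)$. Taking the union over all $u \in [n]$ then yields the three stated inclusions.

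There is essentially no obstacle here; the only point needing a line of care is to phrase each of the three notions as a ``for all test sets of size at most $k$'' statement, so that the single monotonicity observation ${[n] \choose \leq k-1} \subseteq {[n] \choose \leq k}$ does the work uniformly. One could alternatively specialize the already-established fact $\ID_k(\pp) \subseteq \ID_{k'}(\pp)$ for $k' \le k$ to $k' = k-1$ and argue analogously for $\SEP$ and $\DIS$, but it is cleaner simply to record the one-line antimonotonicity argument for all three at once.
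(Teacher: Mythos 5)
Your argument is correct and is exactly the reasoning the paper relies on: the paper simply notes that the three properties are ``clearly antimonotone,'' and your observation that each definition universally quantifies over test sets of size at most $k$, together with ${[n] \choose \leq k-1} \subseteq {[n] \choose \leq k}$, is the intended one-line justification, spelled out. No gap; nothing further is needed.
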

 
We denote by $\sigma(\pp)$ (respectively $\delta(\pp)$) the maximal $k\leq n$ s.t.  $\pp$ is $k$-separable (respectively $k$-distinguishable).
Hence we have $\delta(\pp)\leq \mu(\pp)\leq \sigma(\pp)$.

\section{Lower bounds on $\mu(\pp)$  by a random model }
\label{sec:random}

To study lower bounds on $\ID_k(\pp)$  (or on  $\mu(\pp)$) for  real set of paths we introduce a simple random model. We are given 
$m$ and $n$ natural numbers and $n$ real numbers $\lambda_i \in [0,1]$. The random set  $\pp$ of $m$ paths over $n$ nodes is obtained by taking  
independently $n$ binary strings of length $m$ such that the $i$-th string 
is  distributed according to the binomial distributions $\Bin(m,\lambda_i)$. That means that node $i \in [n]$  will be present on each path with probability 
$\lambda_i$ and absent with  probability $(1-\lambda_i)$.

Our approach to estimate  $|\ID_k(\pp)|$ is the following: 
\begin{enumerate}
\item  by Lemma \ref{lem:sepiddis},  $|\SEP_k(\pp)| \leq |\ID_k(\pp)|$.
\item For $u\in [n]$  we obtain $\nu_{n,m,\lambda}(u)=\Pr[u \in \SEP_k(\pp)]$. 
\item  Given a real set $\hat \pp$ of  $M$ paths on $N$ nodes,  we  consider  $\hat \pp$ to be  a random  experiment and from $\hat \pp$ we compute 
a {\em maximum likelihood estimate} $\hat \lambda_i$ for each of the $\lambda_i$.
\item We estimate $|\SEP_k(\hat\pp)| = \displaystyle \sum_{u \in [N]} \nu_{N,M,\hat \lambda}(u)$
 \end{enumerate}

Let $u \in [n]$ and $W  \in {[n]-\{u\}\choose \leq k}$. Let us say that $(u,W)$ is $\GOOD$ if  there is a path $p \in [m]$ such that $p \in \pp(u) \setminus \pp(W)$.
 $(u,W)$ is $\BAD$ if it is not  $\GOOD$. 

\begin{lemma} 
\label{lem:bad}
Let $u \in [n]$ and $W \subseteq [n]\setminus \{u\}$.
$\Pr[(u,W) \BAD]= \left(1-\lambda_u\prod_{w\in W}(1-\lambda_w)\right)^m.$
\end{lemma}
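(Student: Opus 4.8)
The plan is to compute the probability directly from the independence structure of the random model. Recall that $(u,W)$ is $\BAD$ exactly when there is \emph{no} path $p\in[m]$ with $p\in\pp(u)\setminus\pp(W)$; equivalently, every path either misses $u$ or touches some node of $W$. Since the $m$ coordinates of the random strings $\co_i$ are independent across paths and the $n$ strings are independent across nodes, the $m$ events ``path $p$ is a witness for $(u,W)$'' are mutually independent and identically distributed. So the first step is to reduce the computation to a single path $p$: $\Pr[(u,W)\ \BAD]=\prod_{p\in[m]}\Pr[p\text{ is not a witness}]=\left(\Pr[p\text{ is not a witness}]\right)^m$.

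Next I would compute, for a fixed path $p$, the probability that $p$ \emph{is} a witness, i.e. $u\in p$ and no $w\in W$ lies on $p$. The event $u\in p$ has probability $\lambda_u$ by the $\Bin(m,\lambda_u)$ model (each coordinate is an independent Bernoulli$(\lambda_u)$), and for each $w\in W$ the event $w\notin p$ has probability $1-\lambda_w$; these $|W|+1$ events involve distinct nodes and hence are independent, giving $\Pr[p\text{ is a witness}]=\lambda_u\prod_{w\in W}(1-\lambda_w)$. Taking the complement, $\Pr[p\text{ is not a witness}]=1-\lambda_u\prod_{w\in W}(1-\lambda_w)$, and raising to the $m$-th power yields the claimed formula
\[
\Pr[(u,W)\ \BAD]=\left(1-\lambda_u\prod_{w\in W}(1-\lambda_w)\right)^m.
\]

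This argument is essentially a bookkeeping exercise, so there is no serious obstacle; the only point that needs a little care is making the independence claims precise — namely that across paths the per-path witness events are independent (because the model draws the $n\times m$ bit array with independent entries), and that within a single path the events $\{u\in p\}$ and $\{w\notin p : w\in W\}$ are independent because they concern distinct nodes and $u\notin W$. Once these are stated cleanly the multiplicativity gives the result immediately. I would also note in passing that the formula degenerates correctly: if $W=\emptyset$ it reads $(1-\lambda_u)^m$, the probability that $u$ lies on no path at all, as expected.
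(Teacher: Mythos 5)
Your proposal is correct and follows essentially the same route as the paper's proof: reduce to a single path by independence across paths, compute the per-path probability that the path passes through $u$ and avoids $W$ as $\lambda_u\prod_{w\in W}(1-\lambda_w)$ by independence across nodes, and take the complement raised to the $m$-th power. Your version merely makes the independence assumptions explicit (and adds the $W=\emptyset$ sanity check), which the paper leaves implicit.
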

\begin{proof}
$(u,W) $ is $\BAD$ if and only if  for all $p \in [m]: (p(u) \rightarrow p(W))$. Then $\Pr[(u,W) \BAD]= \left( \Pr[(p(u) \rightarrow p(W))]\right)^m$. 
 The condition $p(u) \rightarrow p(W)$ is the same as
$\neg p(u) \vee \bigvee_{w \in W}p(w)$ which is the same as $\neg \left(p(u) \wedge \bigwedge_{w \in W} \neg p(w)\right)$.
$\Pr[p(u)]= \lambda_u$ and $\Pr[\neg p(w)]= (1-\lambda_w)$. Hence the claim.
 \end{proof}

Let $k\leq n$. and let  $S(k) = {[n-1] \choose \leq k}$.
\begin{theorem} 
\label{thm:mainprob}
Let $n,m, k\in \mathbb N$, $u \in [n]$, and $k\leq n$.  $\Pr[u \in \SEP_k(\pp)] = \displaystyle \prod_{W \in S(k)}\left (1-(1-\lambda_u\prod_{w\in W}(1-\lambda_w))^m\right).$
\end{theorem}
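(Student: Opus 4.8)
The plan is to express the event $\{u \in \SEP_k(\pp)\}$ as an intersection of the events $\{(u,W)\ \GOOD\}$ over all $W \in S(k)$, and then argue that these events are mutually independent so that the probability factorizes. First I would observe that, by the definition of $k$-separability for a node, $u \in \SEP_k(\pp)$ holds precisely when for every $W \subseteq [n]$ with $|W|\le k$ and $u \notin W$ there is a path $p \in \pp(u)\setminus \pp(W)$; that is, $u \in \SEP_k(\pp)$ iff $(u,W)$ is $\GOOD$ for every $W \in S(k)$ (note that $S(k) = {[n-1] \choose \le k}$ indexes exactly the admissible sets $W$, once we identify $[n]\setminus\{u\}$ with $[n-1]$). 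Hence $\Pr[u \in \SEP_k(\pp)] = \Pr\big[\bigcap_{W \in S(k)} \{(u,W)\ \GOOD\}\big]$.

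The crux is the independence argument. The subtlety is that the events $\{(u,W)\ \GOOD\}$ for different $W$ are \emph{not} independent of one another as boolean events on the path-incidence bits — they all depend on the same random column $\co_u$ and on overlapping columns $\co_w$. So a naive ``product of independent events'' claim is false at that level. The right way is to condition on the realization of the column $\co_u$, i.e.\ on the set $\pp(u) = \{p : \co_u[p]=1\}$ of paths through $u$. Given $\pp(u)$, the event $(u,W)\ \GOOD$ says: there exists $p \in \pp(u)$ with $p \notin \pp(w)$ for all $w \in W$, equivalently $\bigwedge_{w \in W}\co_w[p]=0$ for some $p \in \pp(u)$.

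Here the key structural point is that, for a \emph{fixed} set $P = \pp(u)$ of "relevant" paths, the complementary event $(u,W)\ \BAD$ — "for every $p \in P$, some $w\in W$ has $\co_w[p]=1$" — is, conditionally on $\co_u$, still an event whose probability is computed over the independent bits $\{\co_w[p] : w \in [n]\setminus\{u\},\ p \in P\}$, and Lemma~\ref{lem:bad} already tells us its probability exactly when $P=[m]$; the same computation with $P$ in place of $[m]$ gives $\Pr[(u,W)\ \BAD \mid \co_u] = \big(1 - \prod_{w\in W}(1-\lambda_w)\big)^{|\pp(u)|}$... but it is cleaner not to condition on $\co_u$ at all and instead to exploit a different independence: partition the $m$ coordinates and note that $(u,W)\ \GOOD$ is a monotone event on the full matrix. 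The cleanest route I would actually take is: group the defining conditions by \emph{path} rather than by $W$. For a single path $p$, define the (path-indexed) event $G_p(W) = \{\co_u[p]=1 \wedge \bigwedge_{w\in W}\co_w[p]=0\}$; these events for a fixed $p$ but across all of $[m]$ involve disjoint sets of underlying bits (the bits in row $p$ vs.\ row $p'$), hence are independent across $p$. Then $\{(u,W)\ \GOOD\} = \bigcup_{p\in[m]} G_p(W)$, and by independence across rows and Lemma~\ref{lem:bad}, $\Pr[(u,W)\ \GOOD] = 1 - \prod_{p\in[m]}\Pr[\neg G_p(W)] = 1-(1-\lambda_u\prod_{w\in W}(1-\lambda_w))^m$. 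Finally, to multiply over $W \in S(k)$ I would exhibit the events $\{(u,W)\ \GOOD\}_{W}$ as functions of \emph{disjoint blocks of coordinates}: this is the one genuinely delicate point, and I expect it to be the main obstacle — a priori it is false without an extra hypothesis, so I anticipate the intended reading is that the statement is using an implicit independence / per-coordinate-product convention, or that the model is reinterpreted so that each pair $(u,W)$ gets its own fresh batch of $m$ coordinates. Assuming the independence of $\{(u,W)\ \GOOD\}_{W \in S(k)}$ (as the theorem statement's product form asserts), the proof closes immediately by taking the product over $W \in S(k)$ of the single-$W$ probability computed above, yielding exactly $\prod_{W \in S(k)}\big(1-(1-\lambda_u\prod_{w\in W}(1-\lambda_w))^m\big)$. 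I would flag the independence step explicitly and, if the model does not literally support it, note that the expression is then a lower bound (by a correlation inequality such as FKG on the monotone decreasing events $\{(u,W)\ \BAD\}$, whose complement product over-counts), which is still exactly what is needed for the downstream use as a lower bound on $|\ID_k(\pp)|$.
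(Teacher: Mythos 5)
Your route is essentially the paper's: its proof consists exactly of rewriting $\{u \in \SEP_k(\pp)\}$ as the intersection $\bigcap_{W\in S(k)}\{(u,W)\ \GOOD\}$, invoking Lemma~\ref{lem:bad} for the single-$W$ probability $1-(1-\lambda_u\prod_{w\in W}(1-\lambda_w))^m$, and then multiplying over $W\in S(k)$, with no further justification of the factorization. The value of your write-up is that you make explicit the point the paper passes over in silence: the events $\{(u,W)\ \GOOD\}$ for different $W$ all involve the same column $\co_u$ and overlapping columns $\co_w$, so they are not independent in the stated random model, and the claimed equality does not follow from the model as defined. What does follow (by Harris/FKG: after replacing each bit $\co_w[p]$, $w\neq u$, by its negation, all these events are increasing on a common product space, hence positively associated) is the inequality $\Pr[u\in\SEP_k(\pp)]\ \ge\ \prod_{W\in S(k)}\Bigl(1-\bigl(1-\lambda_u\prod_{w\in W}(1-\lambda_w)\bigr)^m\Bigr)$, i.e.\ the product is a lower bound on the separability probability --- and this is precisely the direction needed in Section~\ref{sec:random}, where the formula is used to produce a lower estimate of $|\SEP_k(\hat\pp)|\le|\ID_k(\hat\pp)|$. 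So: your per-$W$ computation coincides with Lemma~\ref{lem:bad} (the detour through conditioning on $\co_u$ is unnecessary, as you note yourself), and the independence step you flag is a genuine gap, but it is a gap in the paper's own proof rather than a defect specific to your attempt; the clean fix is either to state the theorem as the inequality above or to add an explicit independence hypothesis.
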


\begin{proof} Observe that $\Pr[u \in \SEP_k(\pp)] = \Pr[\mbox{ $u$ is $k$-$\SEP$ in $\pp$}]  = 
\Pr[\forall W \mbox{with  $u \not \in  \!W$ and $|W| \leq k$}: (u, W) \GOOD]$. 
By previous Lemma  $\Pr[(u,W) \GOOD]=1-(1-\lambda_u\prod_{w\in W}(1-\lambda_w))^m$. Hence the theorem follows.
 \end{proof}

Assume we have a set $\hat \pp$ of $m$ paths over $n$ nodes.  We consider $\hat \pp$ as a random experiment. 
The standard approach to compute an MLE estimate $\hat \lambda_i$ of the $\lambda_i$ in the case of binomial distribution is to 
compute $\hat \lambda_i$ as the zero of the polynomial obtained by the prime derivative of the function expressing the probability   
that the node $i$ touches $N_i$ paths  in $\hat \pp$.

Let $p_i=\Pr[ \mbox{ node }  i \mbox{ touches } N_i \mbox{ paths  in }  \pp]$. Since in $\pp$ the column $i$ is distributed accordingly to the
$\Bin(m,\lambda_i)$, then  $p_i={m \choose N_i}  \lambda_{i}^{N_i} (1- \lambda_{i})^{m-N_i}$ . We 
study $\frac{d}{d\lambda_{i}} p_i$ and compute $\hat \lambda_i$ by setting $\frac{d}{d\lambda_{i}} p_i=0$. It is easy to see that 
this happen for  $\hat \lambda_{i} = \frac{N_i}{M}$.
\begin{figure*}[]
\centering
\begin{subfigure}{.3\textwidth}
 \centering
  \includegraphics[width=0.70\textwidth]{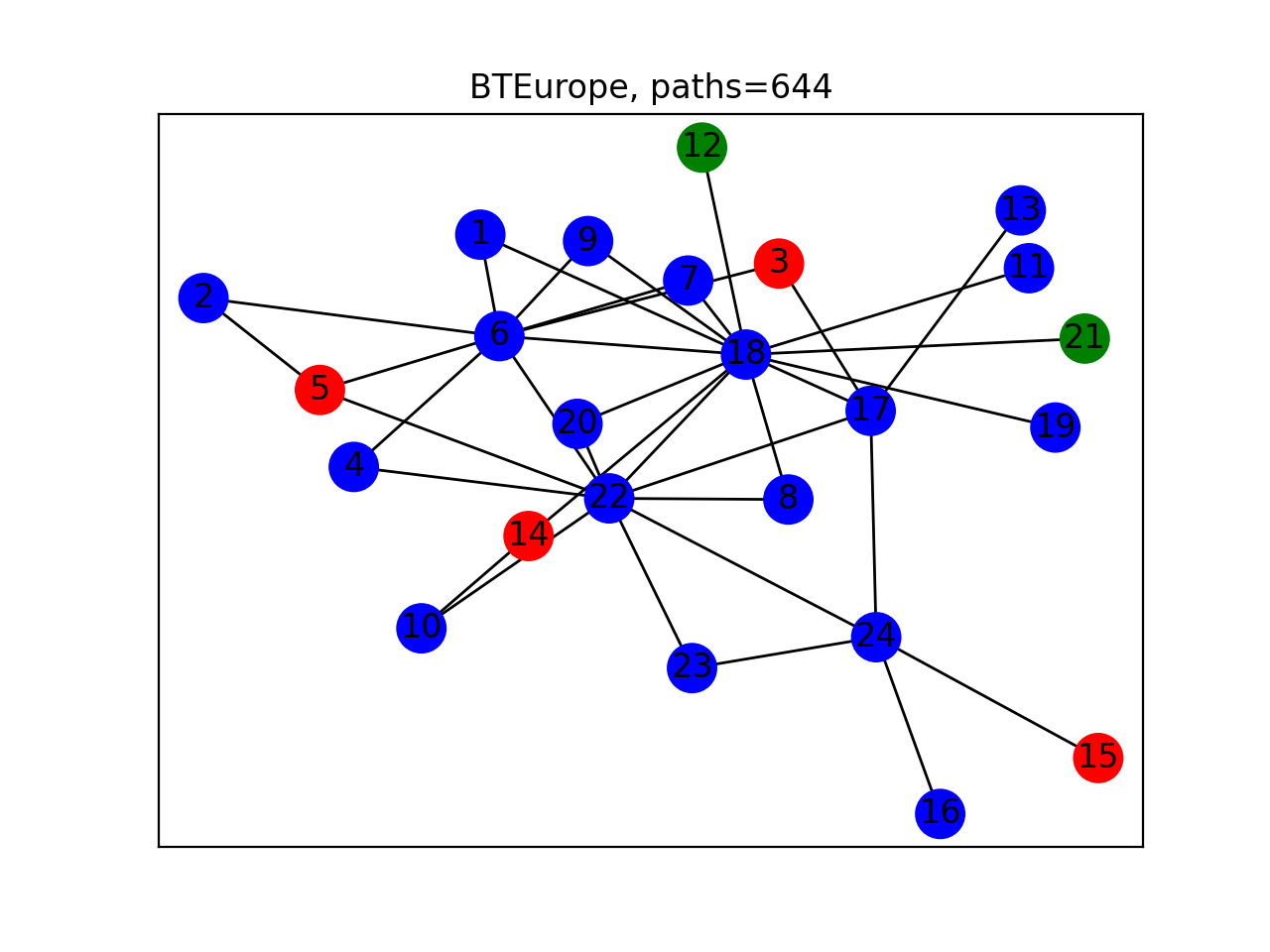}
  \end{subfigure}
  \begin{subfigure}{.3\textwidth}
  \centering
 	 \includegraphics[width=0.70\textwidth]{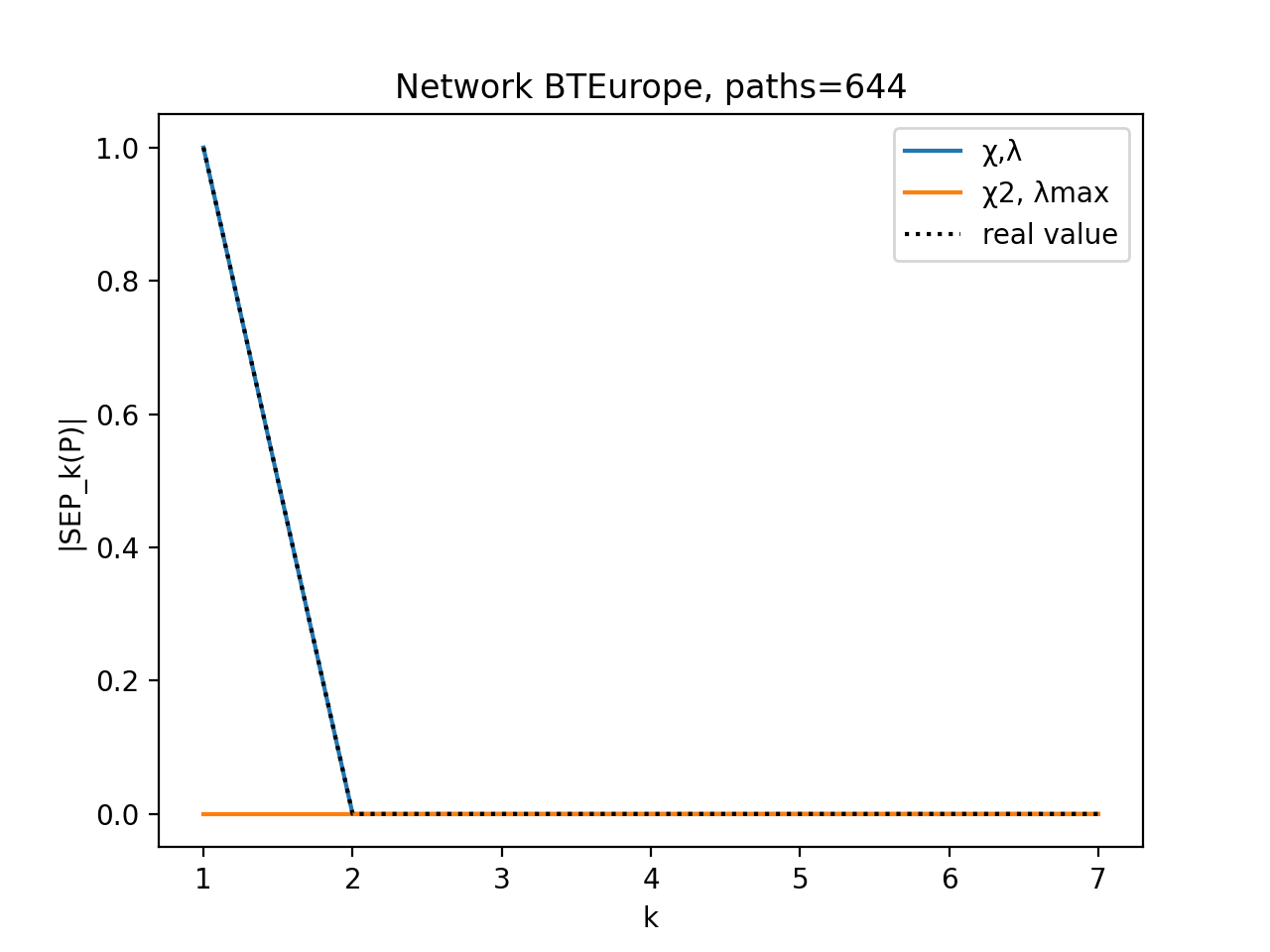}
  \end{subfigure}
  \begin{subfigure}{.3\textwidth}
  \centering
  \includegraphics[width=0.70\textwidth]{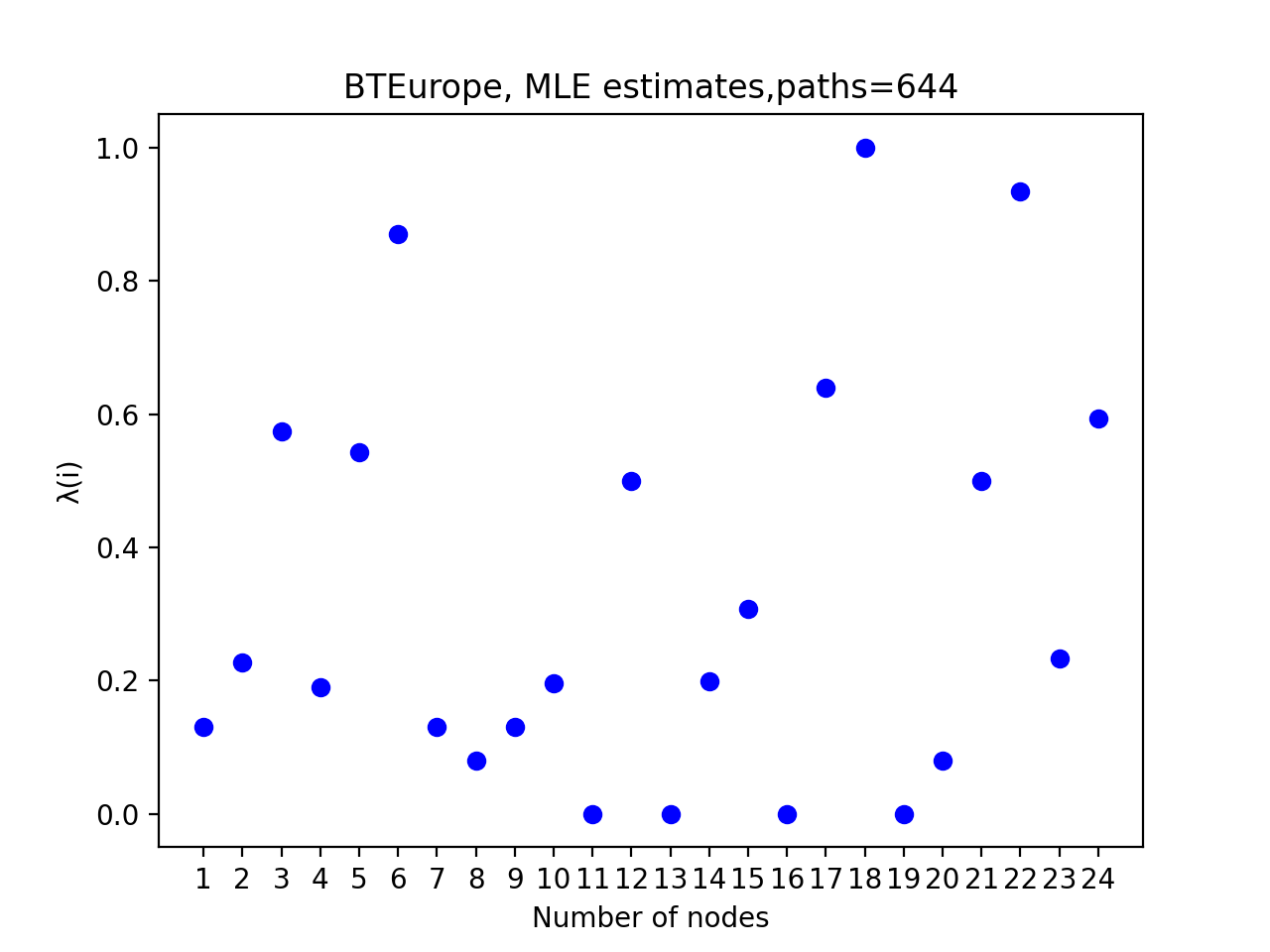}
  \end{subfigure}
\centering
\begin{subfigure}{.3\textwidth}
  \centering
  \includegraphics[width=0.70\textwidth]{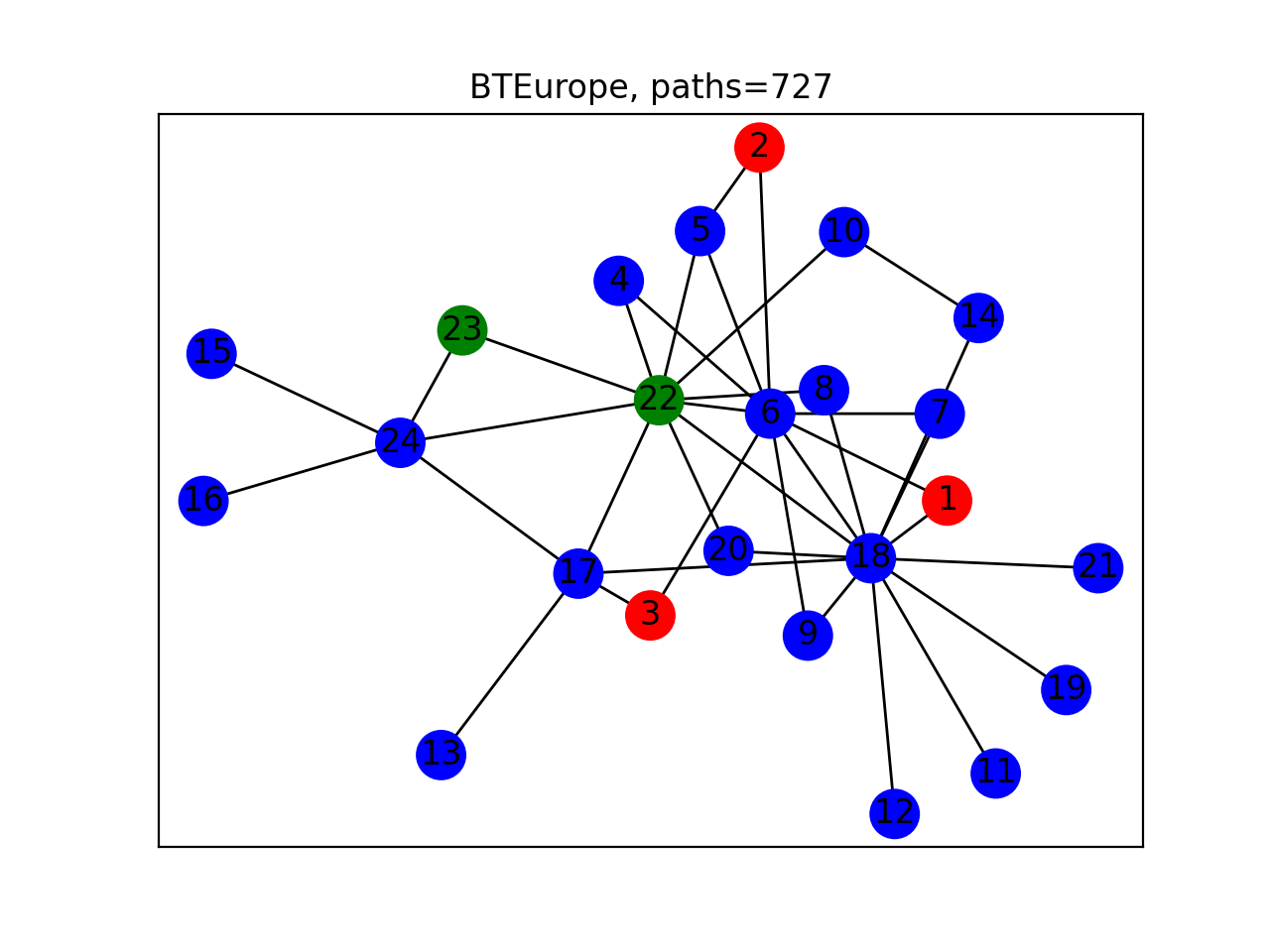}
  \end{subfigure}
  \begin{subfigure}{.3\textwidth}
  \centering
 	 \includegraphics[width=0.70\textwidth]{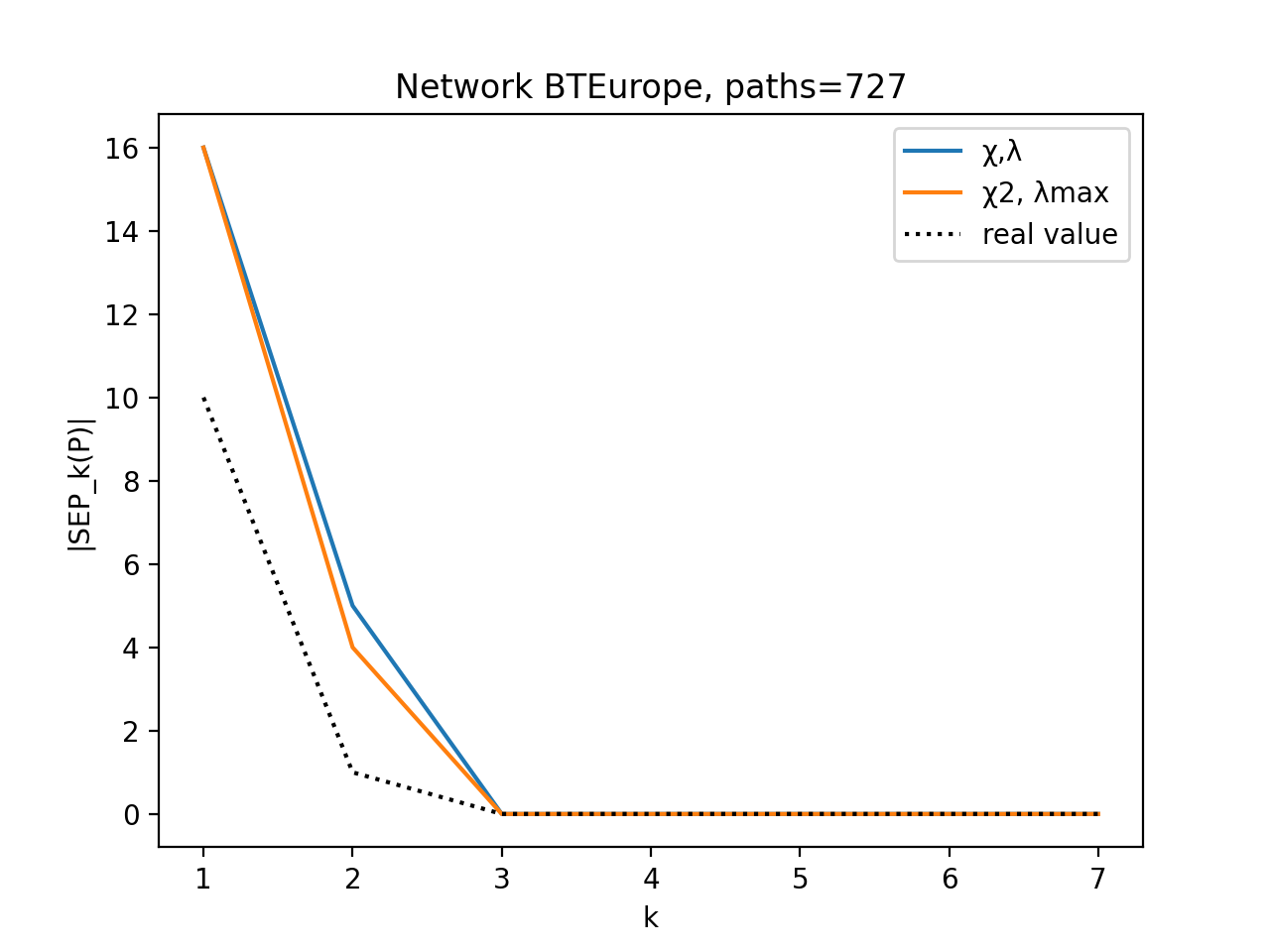}
  \end{subfigure}
  \begin{subfigure}{.3\textwidth}
  \centering
  \includegraphics[width=0.70\textwidth]{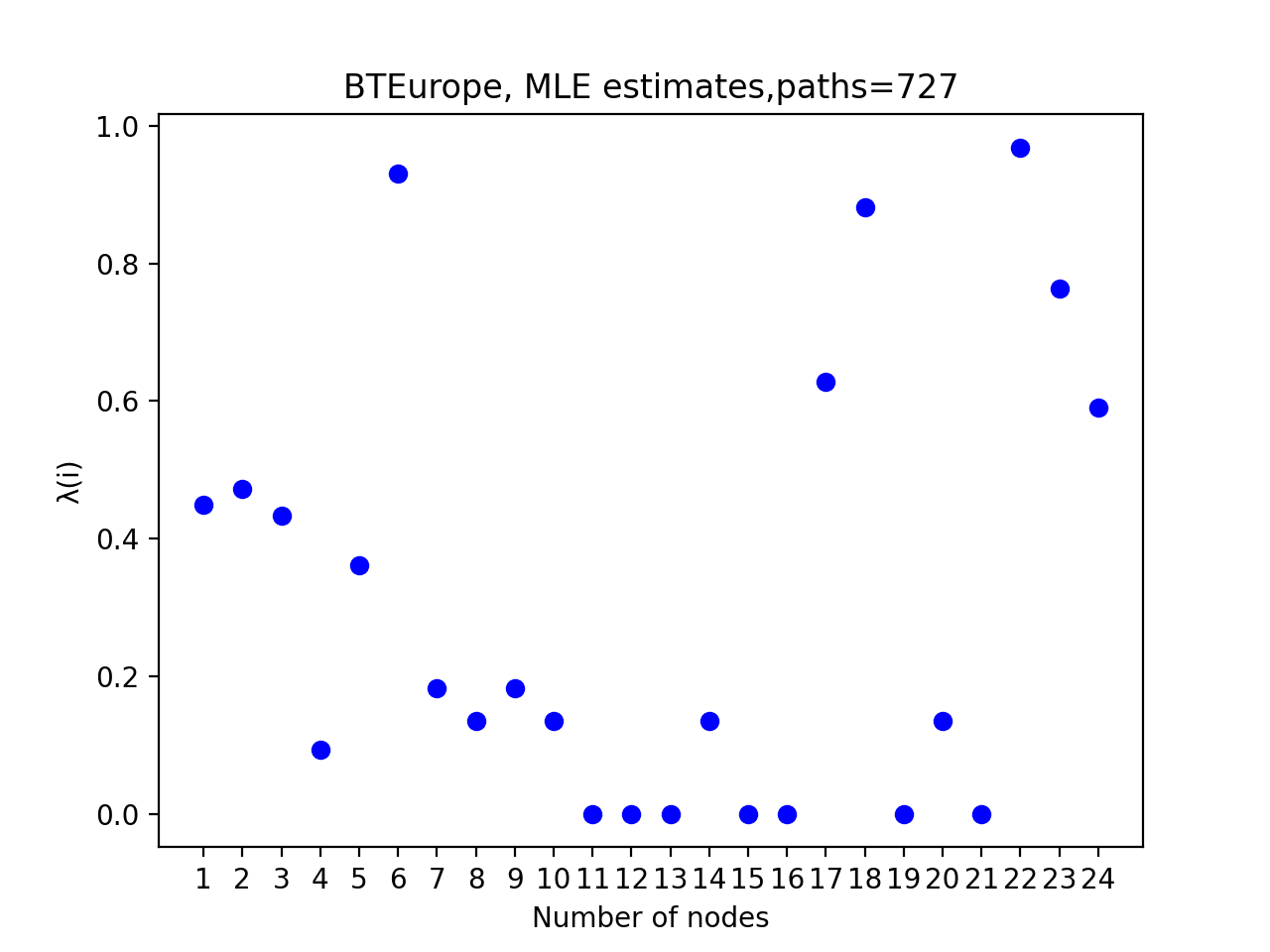}
  \end{subfigure}
  \caption{Data on the network BTEurope}
  \label{fig1}
   \end{figure*} 

 \begin{figure*}[]
\begin{subfigure}{.3\textwidth}
  \centering
  \includegraphics[width=0.70\textwidth]{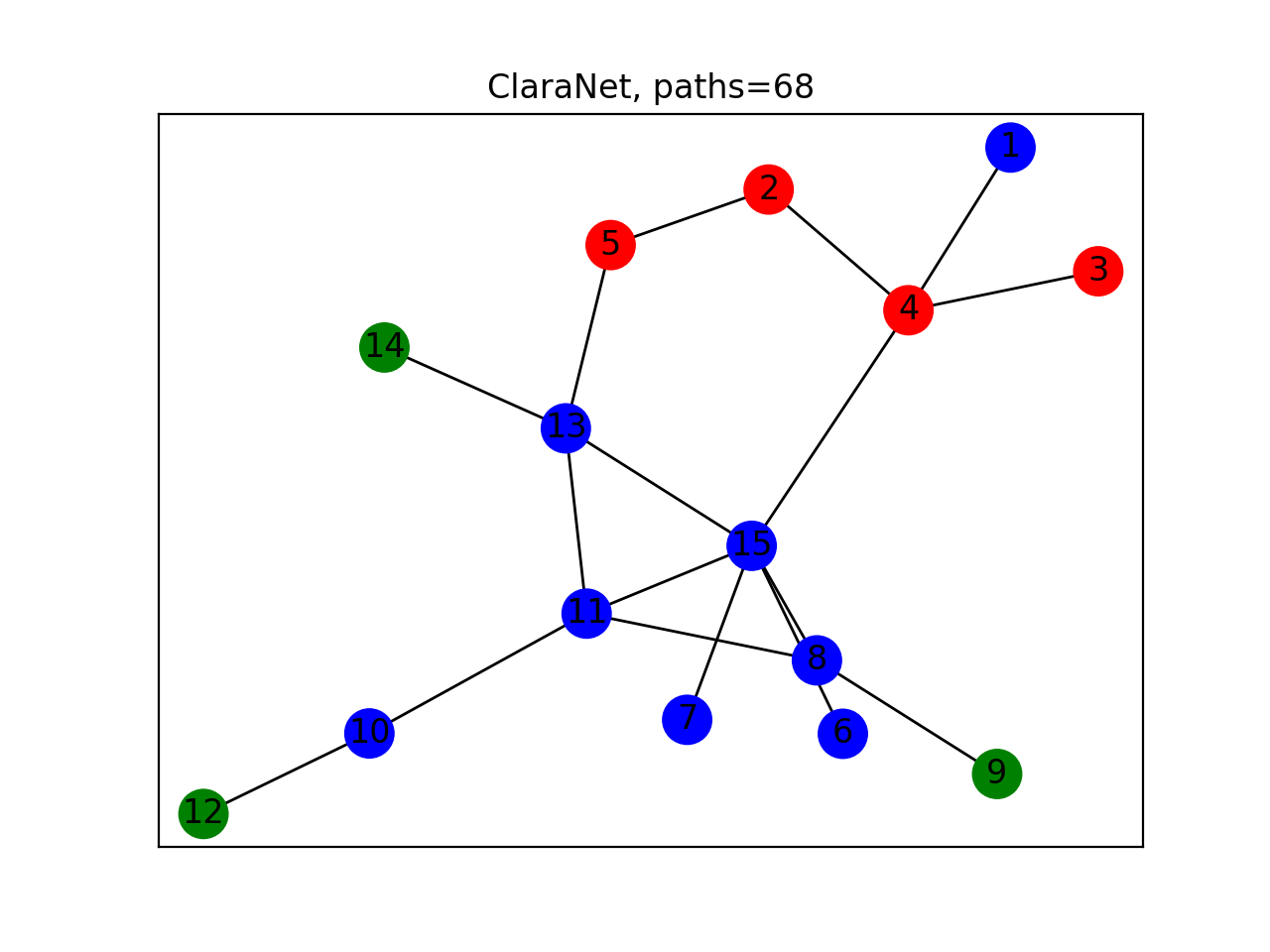}
  \end{subfigure}
  \begin{subfigure}{.3\textwidth}
  \centering
 	 \includegraphics[width=0.70\textwidth]{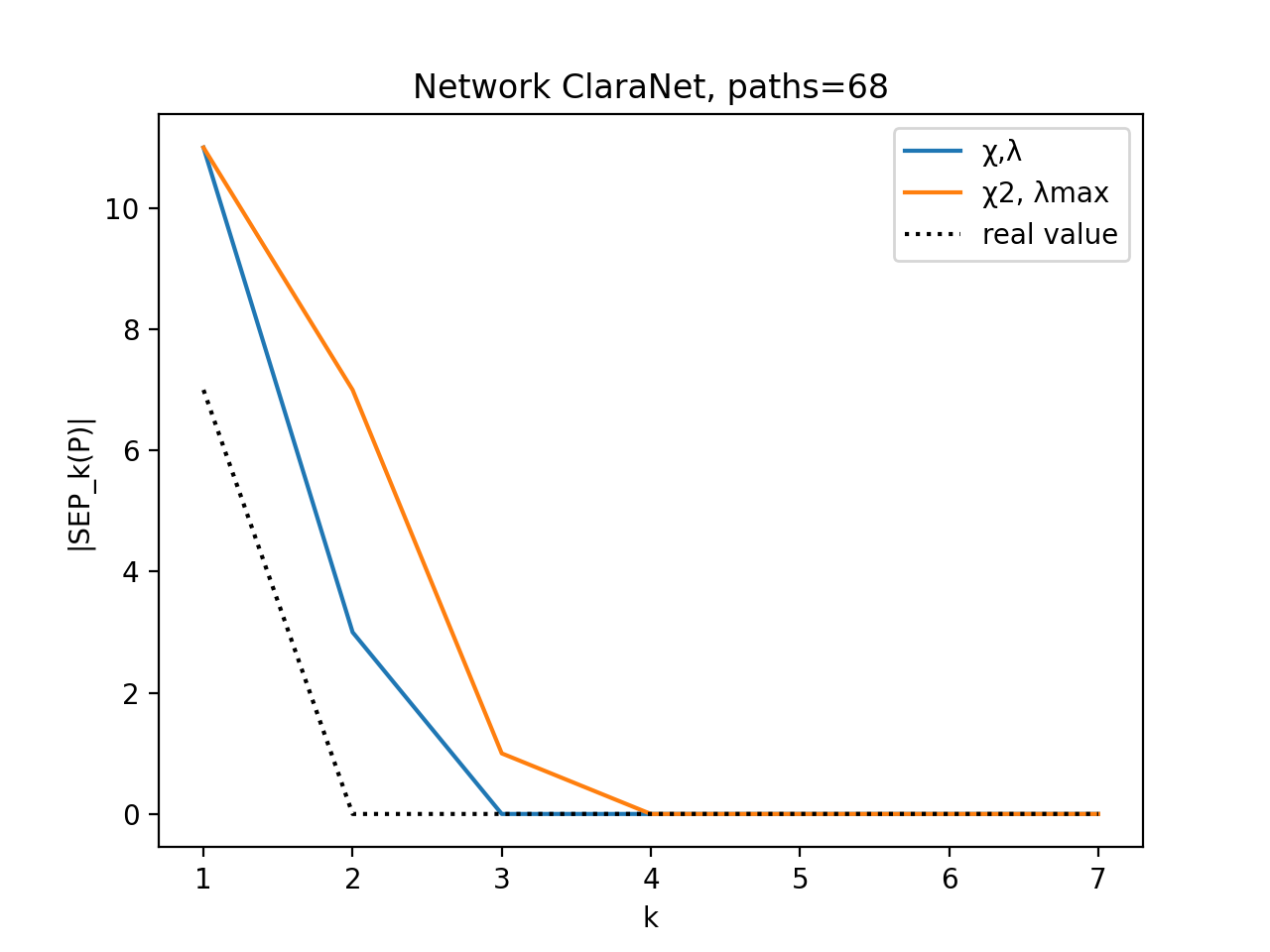}
  \end{subfigure}
  \begin{subfigure}{.3\textwidth}
  \centering
  \includegraphics[width=0.70\textwidth]{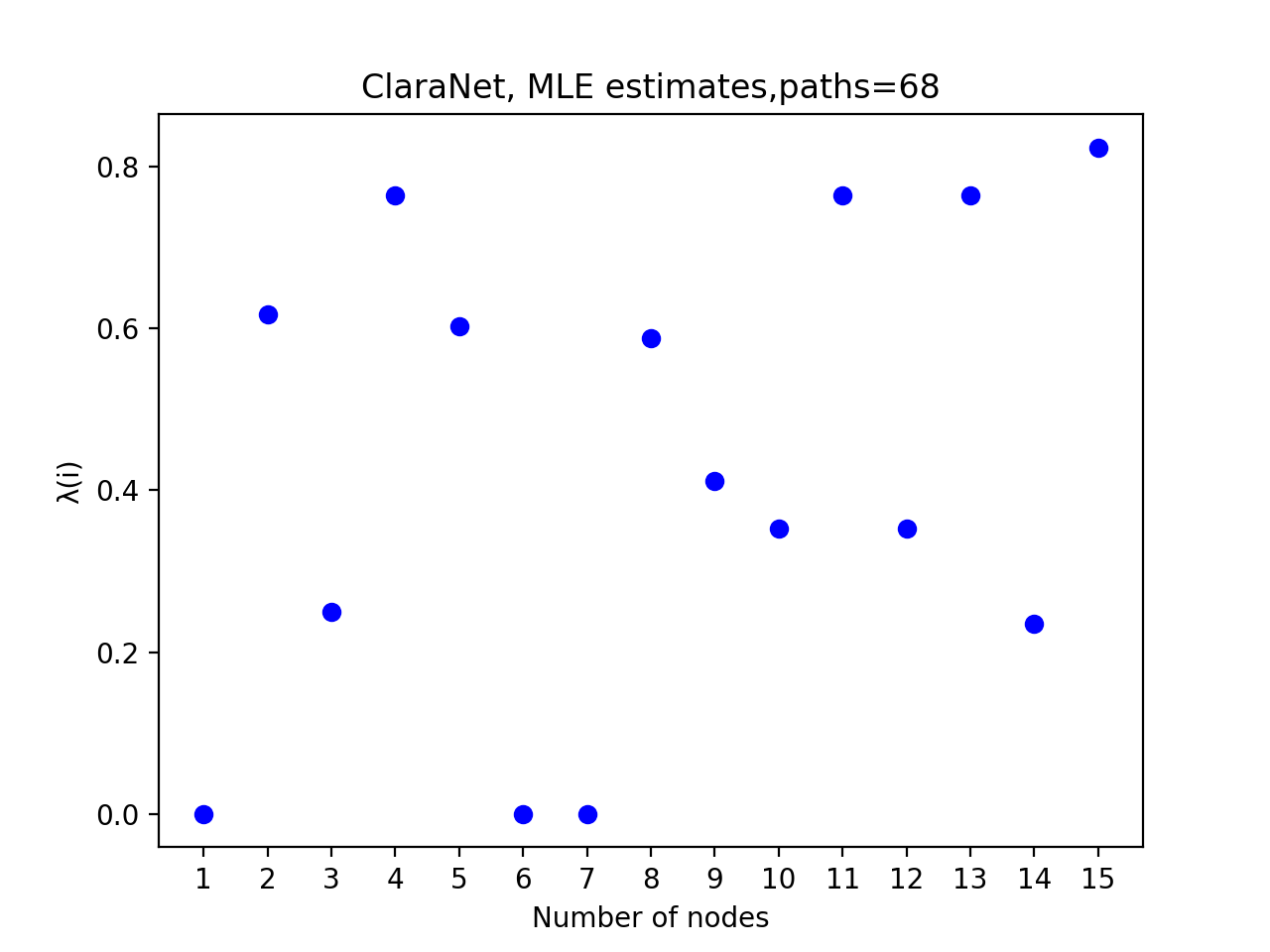}
  \end{subfigure}
\centering
\begin{subfigure}{.3\textwidth}
  \centering
  \includegraphics[width=0.70\textwidth]{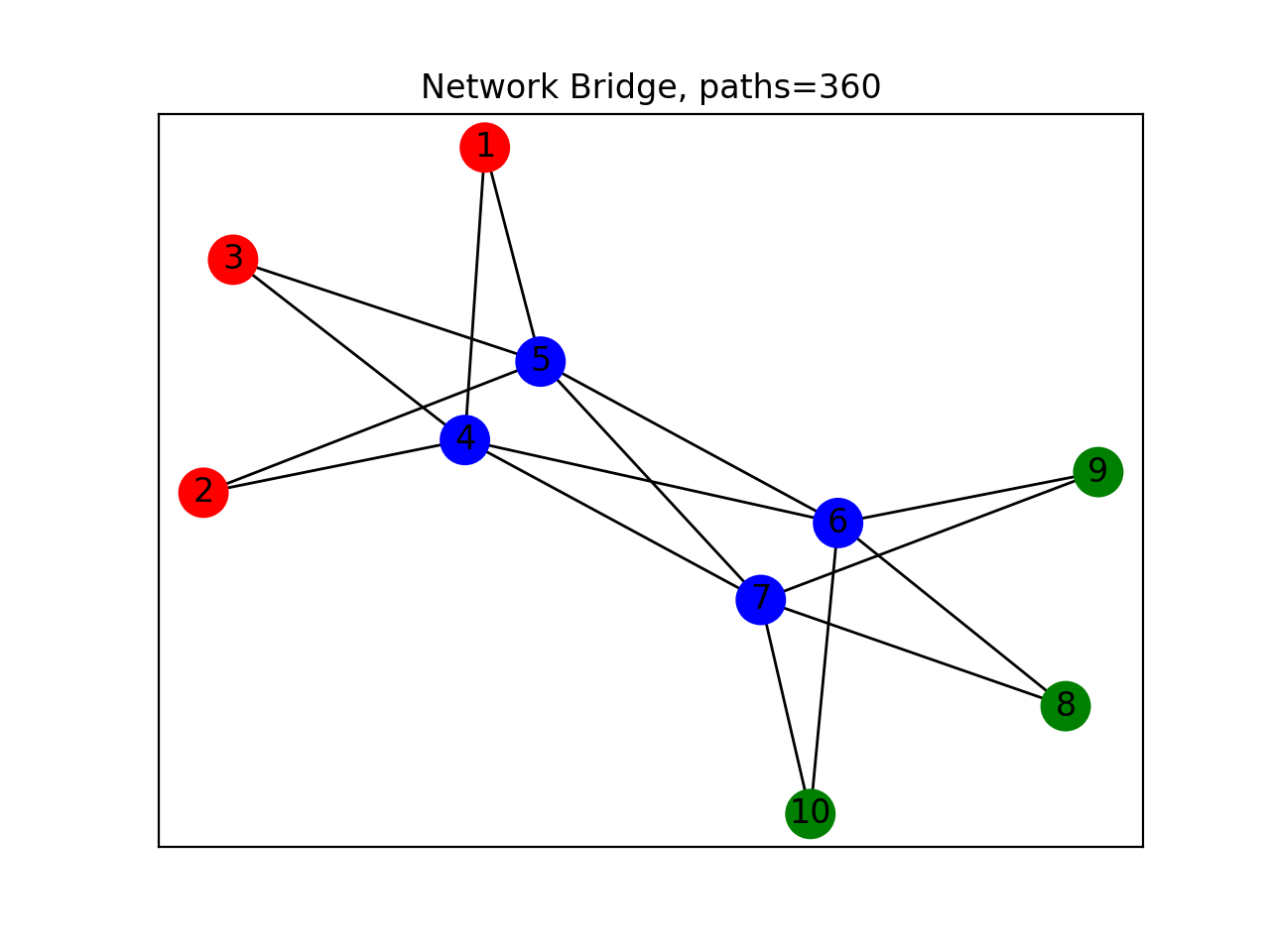}
  \end{subfigure}
  \begin{subfigure}{.3\textwidth}
  \centering
 	 \includegraphics[width=0.70\textwidth]{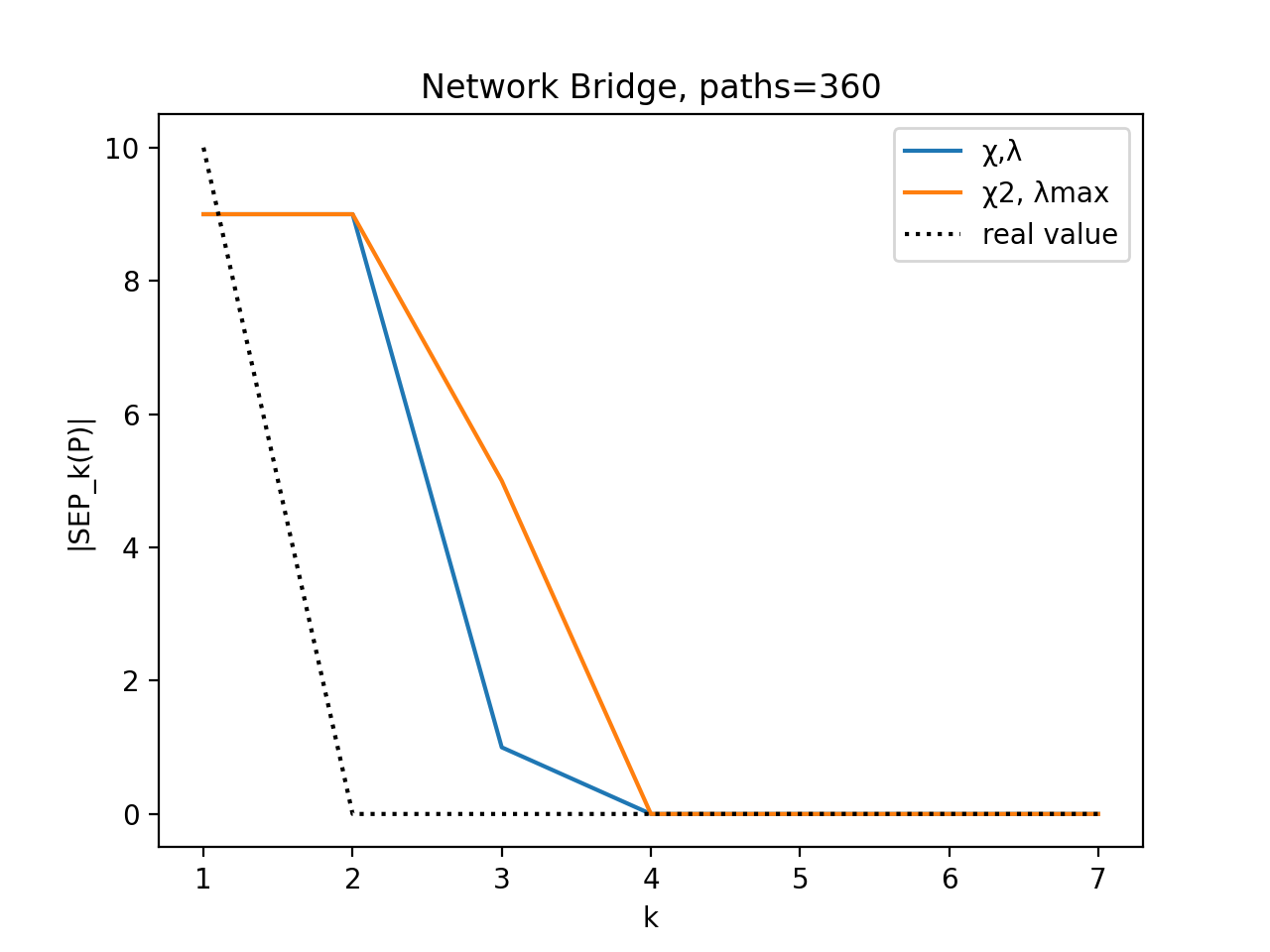}
  \end{subfigure}
  \begin{subfigure}{.3\textwidth}
  \centering
  \includegraphics[width=0.70\textwidth]{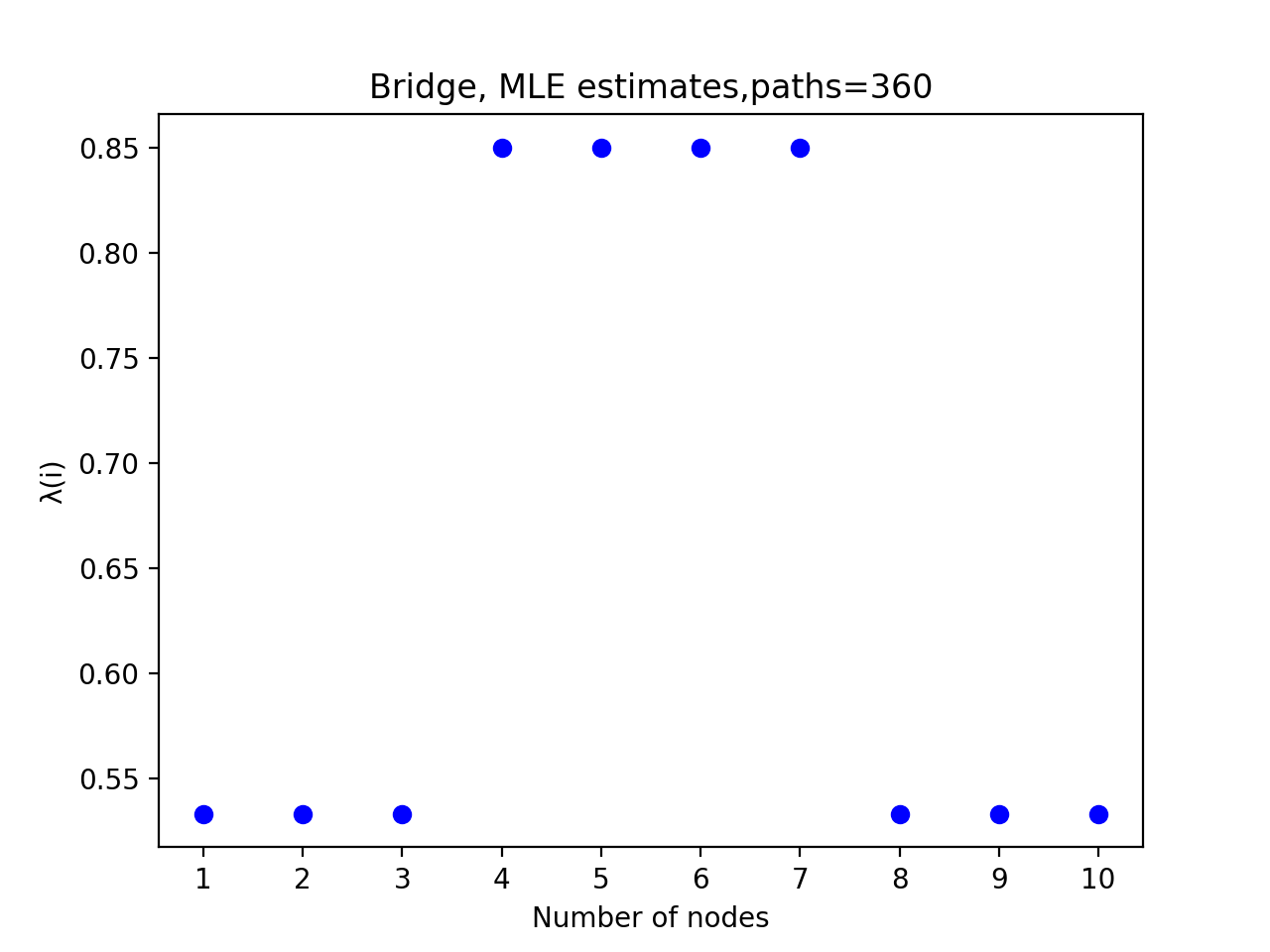}
  \end{subfigure}
  \caption{The network ClaraNet and a Bridge Network}  
     \label{fig2}

 \end{figure*}

\subsection{Experiments}
Let $\nu_{n,m,\vec \lambda}(u)=\Pr[u \in \SEP_k(\pp)]$. 
Assume to have a real set of $M$ paths $\hat \pp$ over $N$ nodes. From $\hat \pp$ we extract  the $\hat \lambda_i$ for all $i \in [N]$
and we then estimate $|\SEP_k(\hat \pp)|$ as  $\chi(\hat \pp,k,\hat \lambda)=\displaystyle \sum_{u \in [N]} \nu_{N,M,\hat \lambda}(u)$, using the closed formula in 
Theorem \ref{thm:mainprob}.   

In Figure \ref{fig1}  and \ref{fig2} we consider two graphs from the Internet topology Zoo (ClaraNet and BTEurope) and 
 we consider set of  measurement  paths $\pp$   obtained from  these networks by taking all the different  paths starting in 
 source and ending in a target node (green nodes are source and red nodes are target).  In the second table in each Figure ee compare the real values of $|SEP_k(\pp)|$ with the values of  
 $\chi(\hat \pp,k,\hat \lambda)$ for all these paths obtaining results very  tight to the real values. 
Notice that  to compute $\nu_{N,M,\hat \lambda}(u)$ we need to compute $\Pr[(u,W) \GOOD]$  for all $W \in S(k)$\footnote{This is because we need to use
 the $\lambda_w$ for all $w \in W$.}. We consider another estimates  of $|\SEP_k(\hat \pp)|$ obtained from $\chi(\hat \pp,k,\vec {\hat \lambda})$
 by having only one value for all the $\lambda_i$'s. We consider the significant case  $\chi_2(\hat \pp,k, {\hat \lambda_{\max}})$, where $\hat \lambda_{\max} = \max_i \lambda_i$\footnote{It is easy to see that $\chi(\hat \pp,k,\vec {\hat \lambda}) \geq \chi_2(\hat \pp,k, {\hat \lambda_{\max}})$.}.  Notice that in these cases we do not have to 
 have available all the $\lambda_w$ for all $W\subseteq [n]$ of size at most $k$ and the computation can be made much less expensive   since  $\chi_2(\hat \pp,k, {\hat \lambda_{\max}})=\prod_{j \in [k]} (1-(1-\lambda_{u}(1-\lambda_{\max})^j)^m)^{{n-1 \choose j}}$ and we can use methods to approximate ${n-1 \choose j}$.  The estimates $\chi_2$ is already very good in all these examples. 
 
In each table we also scatter the estimates $\hat \lambda_i$ coming from the MLE method.

\section{Complexity of $k$-identifiability and the minimum hitting set}
\label{sec:hypergraph}
Consider the optimization problem {\em Minimum Hitting Set}, $\MHS$, that  given an hypergraph 
(a set-system) ${\cal H}=(V,E)$, where $E \subseteq 2^{V}$, asks  to find the smallest
$V'\subseteq V$ such that for all $e \in E$, $V' \cap e \not =\emptyset$. $\MHS$ is a notorious 
$\NP$-complete problem \cite{DBLP:journals/jcss/AusielloDP80,Garey:2000} extending vertex cover.

We  show how to use $\MHS$ to find the minimal $k$ such that $u$ is not $k$-$\ID$ in $\pp$, i.e. 
there exits a set  of nodes $W$ of size $|W|\leq k$ , such that $\pp(u)\subseteq \pp(W)$.  

\begin{theorem}
\label{thm:MHS}
Assume $\MHS$ is solvable in polynomial time, then, deciding  whether $u$ is not $k$-$\SEP$ in $\pp$ is solvable in polynomial time. 
\end{theorem}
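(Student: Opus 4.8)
The plan is to exhibit a polynomial-time reduction from the problem ``is $u$ \emph{not} $k$-$\SEP$ in $\pp$?'' to $\MHS$, so that a polynomial-time solver for $\MHS$ yields one for the separability question. First I would unwind the definition: $u$ is \emph{not} $k$-$\SEP$ in $\pp$ exactly when there is a set $W \subseteq [n]\setminus\{u\}$ with $|W|\le k$ such that every path through $u$ also touches $W$, i.e. $\pp(u)\subseteq \pp(W)$. In other words, $W$ must ``cover'' every path $p\in\pp(u)$ in the sense that $W$ contains some node lying on $p$. This is visibly a hitting-set / covering condition, and the whole content of the theorem is to package it as an instance of $\MHS$.

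Concretely, I would build the hypergraph ${\cal H}=(V,E)$ with ground set $V=[n]\setminus\{u\}$ and, for each path $p\in\pp(u)$, an edge $e_p = \{w \in [n]\setminus\{u\} : p \in \pp(w)\}$, that is, the set of all nodes other than $u$ lying on $p$. A set $W\subseteq V$ hits every edge of ${\cal H}$ if and only if for every $p\in\pp(u)$ there is some $w\in W$ with $p\in\pp(w)$, which is precisely $\pp(u)\subseteq\pp(W)$. Hence the minimum hitting set of ${\cal H}$ has size $\le k$ if and only if $u$ is not $k$-$\SEP$ in $\pp$. Running the assumed polynomial-time $\MHS$ algorithm on ${\cal H}$ and comparing the optimum to $k$ therefore decides non-$k$-$\SEP$; and ${\cal H}$ has at most $m$ edges over at most $n$ vertices and is clearly constructible in time polynomial in the size of $\pp$, so the reduction is polynomial.

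One small point I would be careful about: each edge $e_p$ is nonempty only if some node other than $u$ lies on $p$. If $p\in\pp(u)$ is a path touching \emph{only} $u$, then $e_p=\emptyset$, no $W$ can cover it, and $u$ is trivially $k$-$\SEP$ for every $k$ (indeed $1$-$\SEP$); I would dispose of this as a degenerate case before invoking $\MHS$, or equivalently observe that an empty edge makes the $\MHS$ instance infeasible and interpret that as ``$u$ is $k$-$\SEP$ for all $k$.'' There is no real obstacle here — the argument is essentially a definitional translation — so the only ``hard part'' is stating the correspondence between edges of ${\cal H}$ and paths through $u$ cleanly, and handling this boundary case, rather than any genuine combinatorial difficulty. (Note this also immediately gives the companion $\NP$-completeness statement for the optimization version, since $\MHS$ reduces back the other way by an analogous construction.)
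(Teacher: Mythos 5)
Your proposal is correct and takes essentially the same route as the paper: reduce the question of whether $u$ is not $k$-$\SEP$ to an $\MHS$ instance whose ground set is the set of nodes other than $u$ and whose edges are the paths of $\pp(u)$ restricted to those nodes, then compare the optimum to $k$. If anything, your write-up is slightly more careful than the paper's, since you explicitly exclude $u$ from the vertex set and dispose of the degenerate case of a path touching only $u$, which the paper brushes aside without justification.
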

\begin{proof}
Consider the subset $T(u)$  of $[n]$ of those nodes touching at least a path in $\pp(u)$.
Let $Y$ be the vector of dimension $|\pp(u)|$ defined in the $j$-th coordinate  as follows: 

$$
Y[j] = \bigvee_{v \in T(u)} \pp[j,v] \quad \quad \quad j\in \pp(u)
$$ 

$Y$ has no $0$-coordinate. For otherwise there is a path in $\pp$ only touching $u$.
Hence  $Y$ has all $1$-coordinates. We consider the set-system ${\cal H}$ obtained from $\pp$  by restricting 
the columns to $T(u)$ and the rows to $\pp(u)$. 
Let  $W$ be the smallest subset of $T(u)$ provided by $\MHS$ and covering all $\pp(u)$.
Hence $u$ is not $|W|$-$\SEP$, since $\pp(u) \subseteq \pp(W)$.

The optimality of the bound is an immediate consequence of the optimality of $\MHS$. 
There is no  subset  $Z$ of $[n]$ smaller than $U$ such that $\pp(u) \subseteq \pp(Z)$, since of course
 $Z \subseteq T(u)$ and, by optimality of $\MHS$, $Z$ it cannot be smaller than $U$. 
 \end{proof}
 
The problem of finding a {\em minimal transversal} in an hypergraph is a simplification of $\MHS$ (see below) which can be decided efficiently. 
Our reduction hence suggests to implement an algorithm on concrete example of paths where we find the smallest transversal instead of the minimal hitting set.

Let us recall the following definitions from hypergraph transversal problem 
\cite{DBLP:journals/siamcomp/EiterG95}.

\begin{definition}
Let ${\cal H}=(V,E)$ be an hypergraph. A set $T \subseteq V$ is called a {\em transversal} of $H$ if it meets all the edges of $H$, i.e. if $\forall e \in E : T \cap e \not =\emptyset$. A transversal $T$ is called {\em minimal} if  no proper 
subset $T'$ of $T$ is a transversal.
\end{definition}

It is possible to find in time $O(|V||E|)$  a minimal transversal of ${\cal H}$ by the following  algorithm (see also \cite{DBLP:journals/siamcomp/EiterG95}). 
If $E=\emptyset$, then every subset of $V$ is a transversal of ${\cal H}$, hence the minimal one is 
$\emptyset$.  If $E \not =\emptyset$, let $V=\{v_1,\ldots v_n\}$. Then define:

$$
\begin{array}{l}
V_0=V  \\
V_{i+1} =\left \{ 
                \begin{array}{ll}
                       V_i & V_i \setminus \{v_i\} \mbox{ is not a transversal of ${\cal H}$} \\
                      V_i \setminus \{v_i\} & V_i \setminus \{v_i\} \mbox{ is a transversal of ${\cal H}$}
		  \end{array} \right.
\end{array} 
$$
Hence $V_n$ is a minimal transversal of ${\cal H}$. Notice however that $V_n$ it is not necessarily the 
smallest (by cardinality) transversal of ${\cal H}$. In fact this last problem is the $\MHS$ 
problem which is $\NP$-hard.

Let us call ${\tt HT}$ be a procedure that implements the previous algorithm on a  given ${\cal H}(V,E)$,  and given an order on $V$, and outputs a  minimal transversal of ${\cal H}$.

The proof of Theorem \ref{thm:MHS} suggests an algorithm to compute an upper bound on the $k$-separablity of a node $u$ in $\pp$, where instead of computing the minimum hitting set we compute a minimal transversal using 
${\tt HT}$ on any order of the variables.    

\begin{algorithm}[]
\SetAlgoLined
\KwData{$\pp,u$}
\KwResult{$(W,s)$ s.t. $\pp(u) \subseteq \pp(W)$ and  $|W|=s$ }
 \caption{Algorithm {\tt Simple}-$\SEP$}
 
$W={\tt HT}([n],\pp(u))$\;
\Return $(W,|W|)$;
\end{algorithm}

However we can think of a slightly different  algorithm which  is computing 
{\tt HT} not only once on $([n],\pp(u))$ but several times on a sequence 
of hypergraphs of decreasing complexity.

Consider the  following sets: for all 
$i \in \pp(u)$, let 
$Z(v) = \{ i \in [n] \;|\; \pp[i,v]=0\}$, for all $v \in [n]$ and $V^i =\{ v \in [n] \;|\; |\pp(u)  \cap Z(v)|=i\}$. 
Let $I=\{i_1,\ldots i_N\} \subseteq \pp(u)$ be the set of indices of the $V_{i_j} \not =\emptyset$.
We say that $V_{i_N},\ldots, V_{i_1}$ is a $0$-decreasing sequence since, by definition,
$Z(v) > Z(w)$ whenever $v \in V_i$, $w \in V_j$ and $i<j$.

\begin{algorithm}[]
\SetAlgoLined
\KwData{$\pp,u$}
 \caption{{\tt Decr}-$\SEP$}
\KwResult{$(W,s)$ s.t. $\pp(u) \subseteq \pp(W)$ and  $|W|=s$ }

Compute all $V_i$'s \;
Compute $I$ \;
$\pp_0[u]=\pp[u]$\;
 \For{$l=0 \ldots N$}{
 $k=N-l$\;
 \For{$j \in \pp(u)$}{
 \eIf{$j \in \pp_l[u]$}
  {$\vec Y_{i_k}[j]= \bigvee_{v \in V_{i_k}} \!\!\!\pp_l[j,u]$} 
 {$\vec Y_{i_k}[j]=0$ \label{line:vecY}\;}}
 $\hat V_{i_k} ={\tt HT}(V_{i_k},\pp_l[u])$\label{line:ht1}\;
 $Z_{i_k}$ = $0$-coordinates of $Y_{i_k}$\;
$\pp_{l+1}(u)=\pp_l(u) \cap Z_{i_k}$\label{line:newP}\;
 }
${\cal Y} = {\tt HT}(I, \bigcup_{i \in I} \vec Y_i)$\label{line:ht2}\; 
$W=\bigcup_{i \in {\cal Y}} \hat V_i$\;
\Return $(W,|W|)$\;
\end{algorithm}

The algorithm starts by computing the set $V_i$ and the set of indices $I$ of such sets which are not empty.
The main observations on the algorithm are the following:
\begin{itemize}
\item that $V_{i_N},\ldots, V_{i_1}$ is a $0$-decreasing sequence. At each step we try to cover only the paths in $\pp(u)$ not already covered before. This is the reason why in line \ref{line:newP} we restrict only to $0$-coordinates in $Z_{i_k}$. The vectors $Y_i$ are also defined accordingly. Only the coordinates in $\pp_l(u)$ are important since the rest are already covered by some previous $V_i$. That is the reason why in line \ref{line:vecY} we define to be $0$ the $Y$ vector in all the coordinates not  in  $\pp_l(u)$.
\item Another observation is that at each step $l$ we want to save the minimal set of nodes $\hat V_{i_{N-l}}$ sufficient to cover all the $1$'s in $\pp_l[u]$.  This is the meaning of the call to {\tt HT} in line \ref{line:ht1}.

\item finally, when we have done with  analizying all the family of the sets $V_i$'s,  $\pp(u)$ is covered by the union of the $Y$ vectors (this is by an argument similar to that fof Theorem \ref{thm:MHS}). But it is sufficient  to have the minimal subset of this family for covering all $\pp(u)$. To this end we perform a  final call to {\tt HT} on input the set-system, 
$(I, \bigcup_{i \in I} \vec Y_i)$ in \ref{line:ht2}. 
\end{itemize}

\subsection{$\NP$-Completeness}
Consider the following optimization problem MIN-NOT-$\SEP$ ($\MNS$).

\noindent{\em Input:} A Boolean $m\times n$ matrix $\pp$, an element $u \in[n]$;

\noindent{\em Output:} $k$ such that $u$ is not $k$-$\SEP$ and $u$ is $k'$-$\SEP$ for all $k'<k$.

\begin{theorem} $\MNS$ is $\NP$-complete
\end{theorem}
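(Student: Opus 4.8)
The plan is to show $\MNS \in \NP$ and then exhibit a polynomial-time reduction from a known $\NP$-hard problem. Membership in $\NP$ requires a subtle formulation, because the natural certificate — a set $W$ of size $k$ with $\pp(u)\subseteq\pp(W)$ — only certifies that $u$ is not $k$-$\SEP$, not that $u$ is $k'$-$\SEP$ for all $k'<k$; the latter is a universally quantified (co-$\NP$-looking) condition. The standard fix is to phrase $\MNS$ as a decision problem in the usual way ("is the optimum value $\le k$?", equivalently "is there $W$, $|W|\le k$, with $\pp(u)\subseteq\pp(W)$?"), for which the set $W$ itself is a polynomial-size, polynomial-time-checkable witness: one verifies $|W|\le k$ and, for each path $p\in\pp(u)$, that some $w\in W$ has $\pp[p,w]=1$. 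So first I would record that the decision version of $\MNS$ is in $\NP$ with this certificate, noting that by antimonotonicity of $k$-separability the exact-optimum output formulation and the threshold formulation are polynomially interreducible.

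For hardness I would reduce from $\MHS$ (Minimum Hitting Set), which the paper has already invoked as $\NP$-complete and whose relationship to non-separability is exactly the content of Theorem~\ref{thm:MHS}. Given an instance $({\cal H}=(V,E),t)$ of the decision version of $\MHS$ — does ${\cal H}$ have a hitting set of size $\le t$? — I build a set of paths $\pp$ over the node set $V\cup\{u\}$ for a fresh node $u$, with one path per edge $e\in E$: the path for $e$ touches $u$ and exactly the vertices of $e$. Then $\pp(u)$ is the whole set of $|E|$ paths, and for $W\subseteq V$ we have $\pp(u)\subseteq\pp(W)$ iff $W$ meets every edge $e$, i.e.\ iff $W$ is a transversal of ${\cal H}$. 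Hence $u$ is not $k$-$\SEP$ in $\pp$ iff ${\cal H}$ has a transversal of size $\le k$, so the minimum $k$ with $u$ not $k$-$\SEP$ equals the minimum hitting set size of ${\cal H}$; in particular ${\cal H}$ has a hitting set of size $\le t$ iff the $\MNS$-optimum for $(\pp,u)$ is $\le t$. One should also check the encoding is legitimate: each path column is distinct provided no two edges of ${\cal H}$ coincide (edges of a hypergraph are distinct by definition) and no column is all-zero since every path touches $u$; if ${\cal H}$ has an empty edge one handles it as a trivial "no" instance, and isolated vertices of $V$ appearing in no edge can be dropped or left as all-zero columns after padding with a dummy path. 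The construction is clearly polynomial.

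The main obstacle — and the point most worth being careful about — is the $\NP$ membership argument, precisely because of the "exact optimum" phrasing in the problem statement: one must argue that certifying the optimum is polynomially checkable, which works only because $k$-separability is antimonotone in $k$ (Lemma immediately before $\sigma(\pp)$ is defined), so a witness $W$ with $|W|=k$ and $\pp(u)\subseteq\pp(W)$ together with the absence of any such $W'$ of size $k-1$ pins the value — and the threshold version is the one that is genuinely in $\NP$, from which the standard optimization-vs-decision equivalence gives the result. The reduction direction itself is routine once Theorem~\ref{thm:MHS}'s correspondence is in hand; no delicate estimates are needed.
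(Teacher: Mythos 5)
Your proposal is correct and takes essentially the same route as the paper: the hardness direction is exactly the paper's reduction from $\MHS$ (node set $V\cup\{u\}$, one path per edge $e$ touching $u$ together with the vertices of $e$, so that $\pp(u)\subseteq\pp(W)$ iff $W$ is a transversal, and the optimum of $\MNS$ equals the minimum hitting set size). Your $\NP$-membership argument, via a direct certificate for the threshold version together with antimonotonicity of $k$-separability, is if anything more careful than the paper's, which simply invokes the reduction $\MNS\leq_p\MHS$ of Theorem~\ref{thm:MHS} and the fact that $\MHS\in\NP$.
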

\begin{proof}

To see  that $\MNS$ is in $\NP$ we can use the reduction in Theorem \ref{thm:MHS} which is in fact proving that 
$\MNS \leq_p \MHS$. Since $\MHS \in \NP$ \cite{DBLP:journals/jcss/AusielloDP80}, then $\MNS \in \NP$.

To prove the $\NP$-hardness of $\MNS$ we show the opposite reduction, i.e. that $\MHS\leq_p \MNS$. Hence the result follows by the 
$\NP$-hardenss of $\MHS$ \cite{DBLP:journals/jcss/AusielloDP80}.
Let ${\cal H}=(V,E)$ be an instance of $\MHS$. We define an instance of $\MNS$ as follows:
\begin{itemize}
\item The set of nodes of $\pp$ is $V \cup \{u\}$. 
\item The set of paths of $\pp$ is $E$;
\item $\pp(u)=E$;
\end{itemize}
Since a  minimal hitting set $W$ is touching all edges in $E$, that means that
$\pp(W) = E =\pp(u)$. Hence $u$ is not $|W|$-$\SEP$. Moreover since it is minimal, then
for any subset $W'$of $[n]$ of size smaller that $|W|$ there is an edge $e \in E$ not in $W'$.
That means that $e\in \pp(v)\setminus \pp(W)$, that is $u$ is $k'$-$\SEP$ in $\pp$ for any $k'< |W|$. 

On the opposite direction,  assume that $W \subseteq [n]-\{u\}$ is witnessing that $u$ is not $|W|$-$\SEP$ but
is $k'$-$\SEP$ for any $k'<|W|$, then $W$ is clearly a minimal hitting set in ${\cal H}$.

\section{Localizing failure nodes in real networks}
\label{sec:app}
In this section we study some heuristics to compute as more precisely as possible the number of $k$-identifiable nodes in set of measurements paths defined on concrete networks, that is the the set of all paths from between 
 monitor nodes.
According to Section \ref{sec:ref}  we study upper bounds on the number of $k$-distinguishable nodes.  

To upper bound $|\DIS_k|$ we lower bound the number of node which are {\em not} distinguishable in $\pp$.  In fact we  will localize specific sets of nodes which we can guarantee to be not $k$-distinguishable.

Let $\pp$ be given and let $u \in V$. We let $\mathbb W_k(u)$ be a subset of ${V\setminus\{u\} \choose \leq k}$.
This should be meant as (a method to generate) a collection of subset of at most $k$ nodes in $V-\{u\}$ as function of the node $u$. An example can be:  the subsets of $[n]$ made by  at most $k$ nodes which are at distance at most $d$ from $u$.
For any $v \in \mathbb W(u)$, let ${\cal P}(u,v) \subseteq \pp(u) \cap \pp(v)$. 
This should be meant as (a method to generate) a subset of all paths touching both nodes $u$ and $v$. 
 
\begin{definition}
\label{def:keq}
Let $\mathbb W$ and ${\cal P}$ be given for $\pp$. We say that $u \in [n]$ and $W \in \mathbb W_k(u)$ are {\em $k$-equal  modulo ${\cal P}$ in $\pp$} if
\begin{enumerate}
\item $\exists w,w' \in W$ such that $\pp(u)\setminus  {\cal P}(u,w)\subseteq \pp(w')$, and 
\item $\forall w \in W$, $\pp(w) \setminus {\cal P}(u,w)\subseteq \pp(u)$.
\end{enumerate} 
\end{definition}

Let 
$$
\begin{array}{l}
E_{V,k}[\mathbb W,{\cal P}] := \{u \in V : \mbox{there is a $W \in \mathbb W(u)$} \\
  \quad \quad\quad \quad\mbox{s.t. $u$ and $W$ are $k$-equal modulo ${\cal P}$\}}
\end{array}
$$
\begin{lemma}
\label{lem:eqk} 
For all $V\subseteq [n]$, $E_{V,k}[\mathbb W,{\cal P}] \subseteq \overline{\DIS_k(\pp)}$.
\end{lemma}
\begin{proof}
Let $u \in E_{V,k}[\mathbb W,{\cal P}]$ we have to find a $W \in {V \choose \leq k}$ with $u \not \in W$  such that $\pp(u)=\pp(W)$.
Fix as $W$ the one in $\mathbb W(u)$ given by the the definition of $E_{V,k}[\mathbb W,{\cal P}]$. 
We first argue that $\pp(u) \subseteq \pp(W)$.  By Definition \ref{def:keq} case (1) we know that there are $w,w'\in W$ such that
$\pp(u) -{\cal P}(u,w) \subseteq \pp(w')$. Consider a  $p\in \pp(u)$. If $p\in{\cal P}(u,w)$, then $p\in \pp(w)$ and hence $p\in \pp(W)$. If $p\not \in 
{\cal P}(u,w)$, then $p\in \pp(u) \setminus {\cal P}(u,w)$ and then by Definition \ref{def:keq} case (1)  $p \in \pp(w')$ and hence in $\pp(W)$.  

\smallskip

Let $q\in \pp(W)$, then $q \in \pp(w)$ for some $w \in W$. If $q \in  {\cal P}(u,w)$, then $q\in \pp(u)$. If $q \not \in  {\cal P}(u,w)$, then 
$q \in \pp(w) \setminus {\cal P}(u,w)$ and then, by Definition \ref{def:keq} case (1), $q \in \pp(u)$. 
\end{proof}

By Lemma \ref{lem:eqk} nodes in $E_{[n],k}[\mathbb W,{\cal P}]$ are not $k$-distinguishable and, for the anti-monotonicity, are not 
$(k+1)$-,$(k+2)$-,$\ldots$, $n$-distinguishable.

We now study how to upper bound the number of $k$-distinguishable nodes in $\pp$ given a specific definition of $\mathbb W$ and $ {\cal P}$.   
Consider  the following family of vertices in $[n]$
:$$
\left \{ 
\begin{array}{ll}
V_1=[n]  &  \\
V_k = [n]- \bigcup_{j<k} E_{V_j,j}[\mathbb W,{\cal P}] & \quad k>1\end{array}
\right.
$$

\begin{definition} 
\label{def:tau}
 Let $k \leq n$.
$\tau_k :=|E_{V_k,k}[\mathbb W,{\cal P}]|$.
\end{definition}

\begin{theorem}
$ |\DIS_k(\pp)| \leq n - \sum_{j=1}^k \tau_j$.
\end{theorem}
\begin{proof}
We abbreviate $E_{V_j,j}[\mathbb W,{\cal P}]$ with $E_{V_j,j}$.
First we claim that  $\left | \bigcup_{j\leq k} E_{V_j,j}\right|  \leq \sum_{j=1}^k \tau_j$.
This is because  for all $k \leq n$, if $u \in E_{V_k,k}$, then   $u \not \in \bigcup_{j\leq k-1} E_{V_j,j}$, 
by definition of $E_{V_k,k}$.
 
 \smallskip
 
Further we claim that 
$$E_{V_k,k} \subseteq \overline{\DIS_k(\pp)} \setminus \bigcup_{j\leq k-1} E_{V_j,j}.$$
Indeed by Lemma  \ref{lem:eqk} $E_{V_k,k} \subseteq \overline{\DIS_k(\pp)}$ 
and again by definition of $E_{V_k,k}$, if 
$u \in E_{V_k,k}$, then $u \not \in \bigcup_{j\leq k-1} E_{V_j,j}$ . 
Therefore: 
$$\overline{\DIS_k(\pp)} \geq  |E_{V_k,k}| +\left | \bigcup_{j\leq k-1} E_{V_j,j}\right |.$$

By definition of $\tau_k$ it follows that $|\overline{\DIS_k(\pp)}| \geq \tau_k +  \sum_{j=1}^{k-1} \tau_j$, and hence that
$|\DIS_k(\pp)| \leq n - \sum_{j=1}^{k} \tau_j$.

\end{proof}

Notice that the proof of the theorem is constructive and is counting well-defined nodes in the network, so that
nodes can also precisely be localized.  

\subsection{Examples of applications}
We show how to use previous results to localize and upper bound the number of $k$-identifiable nodes on real 
set of measurements paths. The estimate will depend on 
what set $\mathbb W(u)$ we consider for any node $u$ and on what set of paths ${\cal P}$  we are going to test
path not distinguishability.  However once we have fixed $\mathbb W$ and ${\cal P}$ the algorithm we run is always the same and reflects the discussion in the previous  subsection (See Algorithm {\tt lb-$\DIS_k$}).


\begin{algorithm}[]
\SetAlgoLined
\KwData{$\pp$}
 \caption{{\tt lb-$\DIS_k$}: Counting $k$-$\SEP$ nodes}
\KwResult{number of $k$-$\SEP$ nodes }

 \For{$u \in [n]$}{
 Compute $\mathbb W(u)$\;
 \For{$w \in \mathbb W(u)$}{
  Compute ${\cal P}(u,w)$}
 }
 $V=[n], i=1, \tau=0$\;
  \While{$i\leq k$}{
  Compute $E_{V,i}[\mathbb W,{\cal P}]$\;
  $\tau = \tau+ |E_{V,i}[\mathbb W,{\cal P}]|$\;
  $V= V-E_{V,i}[\mathbb W,{\cal P}]$\;
  $i=i+1$;\
  }
\Return $n-\tau$\;
\end{algorithm}

Our method can be applied on a network given as a graph once we have decided the set of measurements 
paths. Every possible way of choosing $\mathbb W,$ and ${\cal P}$ is giving a way to count nodes which are no 
distinguishable. We caton therefore thinking of applying the method restricting for each node $u$ 
the nodes we are checking  be not distiguinshable and the effective paths we are going to to consider. 
We consider here three potential examples. We will add details on experiments and we wilkla dd other cases 
in the final version of the paper.

\subsubsection{Neighbours}
For any given $u \in[n]$, let $\mathbb W_k(u) = {N(u) \choose \leq k}$, where $N(u)$ are the  neighbours of $u$ and consider 
for all  $v \in N(u)$, the ${\cal P}^N(u)$ of the paths touching both $u$ and its neighbours $v$

\subsubsection{Nodes at a fixed distance $d$}
For any given $u \in[n]$, let $N_d(u)=\{v \in V: d(u,v)=d, d \geq 1\}$
and $\mathbb W_k(u) =  {N_d(u) \choose \leq k}$. For  $v \in N_d(u)$, the ${\cal P}^d$ of the paths touching both $u$ and  $v$.

\subsubsection{Shortest paths}
In this case we consider as set  $\mathbb W_k(u) =   {V-\{u\} \choose \leq k}$, and for all $v \in  V \setminus \{u\}$, ${\cal P}$ is the set of shortest paths from $u$ to $v$.

\section*{Acknowledgements} The authors would like to thank Navid Talebanfard to point them to  the paper
\cite{ST20}.
\end{proof}

\bibliographystyle{acm}
\bibliography{biblioBNTIEEE.bib}

\begin{thebibliography}{10}

\bibitem{DBLP:journals/jcss/AusielloDP80}
{\sc Ausiello, G., D'Atri, A., and Protasi, M.}
\newblock Structure preserving reductions among convex optimization problems.
\newblock {\em J. Comput. Syst. Sci. 21}, 1 (1980), 136--153.

\bibitem{DBLP:journals/ton/BartoliniHAMTK20}
{\sc Bartolini, N., He, T., Arrigoni, V., Massini, A., Trombetti, F., and
  Khamfroush, H.}
\newblock On fundamental bounds on failure identifiability by boolean network
  tomography.
\newblock {\em {IEEE/ACM} Trans. Netw. 28}, 2 (2020), 588--601.

\bibitem{DBLP:journals/tit/Duffield06}
{\sc Duffield, N.~G.}
\newblock Network tomography of binary network performance characteristics.
\newblock {\em {IEEE} Trans. Information Theory 52}, 12 (2006), 5373--5388.

\bibitem{DBLP:journals/siamcomp/EiterG95}
{\sc Eiter, T., and Gottlob, G.}
\newblock Identifying the minimal transversals of a hypergraph and related
  problems.
\newblock {\em {SIAM} J. Comput. 24}, 6 (1995), 1278--1304.

\bibitem{FF86}
{\sc Frankl, P., and F\"uredi, Z.}
\newblock Union-free families of sets and equations over field.
\newblock {\em Journal of Number Theory 23\/} (1986), 210--218.

\bibitem{DBLP:conf/icdcs/GalesiR18}
{\sc Galesi, N., and Ranjbar, F.}
\newblock Tight bounds for maximal identifiability of failure nodes in boolean
  network tomography.
\newblock In {\em 38th {IEEE} International Conference on Distributed Computing
  Systems, {ICDCS} 2018, Vienna, Austria, July 2-6, 2018\/} (2018), {IEEE}
  Computer Society, pp.~212--222.

\bibitem{GR20}
{\sc Galesi, N., and Ranjbar, F.}
\newblock Tight bounds to localize failure nodes on trees, grids and through
  embeddings under boolean network tomography.
\newblock {\em Submitted\/} (2020).

\bibitem{DBLP:conf/algosensors/GalesiRZ19}
{\sc Galesi, N., Ranjbar, F., and Zito, M.}
\newblock Vertex-connectivity for node failure identification in boolean
  network tomography.
\newblock In {\em Algorithms for Sensor Systems - 15th International Symposium
  on Algorithms and Experiments for Wireless Sensor Networks, {ALGOSENSORS}
  2019, Munich, Germany, September 12-13, 2019, Revised Selected Papers\/}
  (2019), F.~Dressler and C.~Scheideler, Eds., vol.~11931 of {\em Lecture Notes
  in Computer Science}, Springer, pp.~79--95.

\bibitem{Garey:2000}
{\sc Garey, M.~R., and Johnson, D.~S.}
\newblock {\em Computers and Intractability, A Guide to the Theory of
  NP-Completeness}, 22~ed.
\newblock W. H. Freeman and Company, New York, 2000.

\bibitem{DBLP:journals/ton/MaHLST14}
{\sc Ma, L., He, T., Leung, K.~K., Swami, A., and Towsley, D.}
\newblock Inferring link metrics from end-to-end path measurements:
  Identifiability and monitor placement.
\newblock {\em {IEEE/ACM} Trans. Netw. 22}, 4 (2014), 1351--1368.

\bibitem{DBLP:journals/pe/MaHSTL15}
{\sc Ma, L., He, T., Swami, A., Towsley, D., and Leung, K.~K.}
\newblock On optimal monitor placement for localizing node failures via network
  tomography.
\newblock {\em Perform. Eval. 91\/} (2015), 16--37.

\bibitem{DBLP:journals/ton/MaHSTL17}
{\sc Ma, L., He, T., Swami, A., Towsley, D., and Leung, K.~K.}
\newblock Network capability in localizing node failures via end-to-end path
  measurements.
\newblock {\em {IEEE/ACM} Trans. Netw. 25}, 1 (2017), 434--450.

\bibitem{DBLP:conf/imc/MaHSTLL14}
{\sc Ma, L., He, T., Swami, A., Towsley, D., Leung, K.~K., and Lowe, J.}
\newblock Node failure localization via network tomography.
\newblock In {\em Proceedings of the 2014 Internet Measurement Conference,
  {IMC} 2014, Vancouver, BC, Canada, November 5-7, 2014\/} (2014),
  C.~Williamson, A.~Akella, and N.~Taft, Eds., {ACM}, pp.~195--208.

\bibitem{DBLP:conf/infocom/RenD16}
{\sc Ren, W., and Dong, W.}
\newblock Robust network tomography: K-identifiability and monitor assignment.
\newblock In {\em 35th Annual {IEEE} International Conference on Computer
  Communications, {INFOCOM} 2016, San Francisco, CA, USA, April 10-14, 2016\/}
  (2016), {IEEE}, pp.~1--9.

\bibitem{ST20}
{\sc Shangguan, C., and Tamo, I.}
\newblock New tur\'an exponents for two extremal hypergraph problems.
\newblock {\em SIAM Journal on Discrete Mathematics\/} (2020).

\end{thebibliography}

\end{document}